\documentclass[11pt,a4paper]{article}
\usepackage[T1]{fontenc}
\usepackage{jheppub}
\usepackage{lmodern,microtype}
\usepackage{amsmath,amssymb,amsthm,bm}
\usepackage{booktabs,array,graphicx}
\usepackage{xcolor}
\newif\ifshowchanges\showchangestrue

\numberwithin{equation}{section}
\newtheorem{theorem}{Theorem}[section]
\newtheorem{proposition}[theorem]{Proposition}
\newtheorem{lemma}[theorem]{Lemma}
\theoremstyle{remark}
\newtheorem{remark}[theorem]{Remark}
\newcommand{\dd}{\mathrm d}
\newcommand{\I}{\mathcal I_3}
\newcommand{\Rcal}{\mathcal R}
\newcommand{\Lcal}{\mathcal L} 
\newcommand{\Bcal}{\mathcal B}
\newcommand{\eps}{\varepsilon}
\newcommand{\bq}{\bm q}
\newcommand{\ee}{\bm e}

\newcommand{\TT}{\mathcal T}

\newcommand{\ovl}{\overline}
\allowdisplaybreaks[1]

\title{Exact Two-Loop Effective Potentials in de Sitter Space via Finite-Difference Methods}
\author[1,2]{Bruce A. Bassett}
\author[3,4]{Sergio L. Cacciatori}
\author[3,4]{Ugo Moschella}
\affiliation[1]{Wits MIND Institute and School of Computer Science and Applied Mathematics, University of the Witwatersrand, Johannesburg, South Africa}
\affiliation[2]{School for Data Science and Computational Thinking, University of Stellenbosch, South Africa}
\affiliation[3]{DiSAT, Universit\`a dell'Insubria, Via Valleggio 11, Como}
\affiliation[4]{INFN, Sez. di Milano, Via Celoria 16, 20133 Milano}
\emailAdd{bruce.bassett@uct.ac.za}
\emailAdd{sergio.cacciatori@uninsubria.it}
\emailAdd{ugo.moschella@uninsubria.it}

\abstract{We compute the renormalised two-loop effective potential of the
$O(N)$-symmetric scalar theory on four-dimensional de Sitter space. The main
technical ingredient is the two-loop watermelon vacuum integral. Although a
closed-form representation of this integral was previously available,
extracting its finite part near four dimensions remained exceedingly difficult. We use
exact finite-difference equations in the propagator spectral labels to obtain
a representation in which the ultraviolet poles are isolated in Gamma-function
factors and the remaining series are regular near four dimensions. We prove
the boundary condition selecting the physical solution, extract the finite
part, and combine the result with the tadpole reconstruction of the potential.
The calculation includes the subtraction of subdivergences and the derivation
of the matter-sector renormalisation-group functions. Holding the
field-dependent masses and scalar curvature fixed makes the spectral labels
regulator-dependent. We track this dependence explicitly and show that its
derivative terms cancel after the forest subtraction. The resulting potential
is exact in the mass-to-curvature ratios at two-loop order, away from
propagator poles and subject to the usual limitations of perturbation theory. This is the first exact two-loop determination of the effective
potential in four-dimensional de Sitter space.}

\begin{document}
\maketitle

\section{Introduction}

The effective potential is the standard tool for studying the vacuum
structure of a quantum field theory. Since the work of Coleman and Weinberg and
of Jackiw \cite{ColemanWeinberg1973,Jackiw1974}, it has been used to analyse
radiative symmetry breaking, vacuum stability and phase transitions, and its
radiative corrections often decide the qualitative picture: whether a symmetry
is broken, whether a vacuum is stable, and at which scale.

In de Sitter space these questions are of direct physical interest. De
Sitter space is the geometry of inflation and, asymptotically, of the late-time
accelerating Universe, and radiative corrections in this background govern
vacuum stability, curvature-induced symmetry breaking and the dynamics of light
scalar fields during cosmological evolution
\cite{Shore,Allen1983,Starobinsky1986,StarobinskyYokoyama1994,MarkkanenRajantieStopyra2018}.
Curvature enters on the same footing as the masses, so a reliable potential
should be exact in both. Perturbation theory in de Sitter space is, however,
considerably harder than in flat spacetime, and beyond one loop most results rely
on approximations, asymptotic expansions, numerical methods or partial
resummations \cite{Serreau2011,LopezNacirMazzitelliTrombetta2016}. Exact
effective potentials are known at one loop, and at two loops in three dimensions
\cite{ACM}.

In flat space the two-loop effective potential has long been known: Ford,
Jack and Jones obtained it for the Standard Model \cite{FordJackJones1992}. At
two loops the new ingredient is the sunset, or watermelon, vacuum diagram.
Rather than integrating it directly, they used integration by parts to derive
differential equations for it in the masses, and solved them, obtaining its
ultraviolet poles and its finite part in closed form. Integration-by-parts
identities \cite{Chetyrkin:1981qh,Laporta:2000dsw}, which follow from the
translation invariance of momentum space, have since become the standard route
to such computations.

None of this carries over to de Sitter space. There are no global
momenta, and integration-by-parts identities in their flat-space form are not
available. The watermelon itself can be evaluated: it was obtained in closed form
in Ref.~\cite{CEM2024} through the K\"all\'en--Lehmann representation of the
product of two propagators \cite{bros1,bros2}. In four dimensions, however,
extracting its finite part from that representation proved exceedingly
difficult, and the two-loop effective potential in four-dimensional de Sitter
space remained out of reach. Curved-space structures related to
integration-by-parts identities have recently been explored from several
directions \cite{Arkani,Chen,ChenFengQinTao,DiakonovMorozov,CostaHansen2018,Alaverdian,Carmi2020,Carmi2021,Carmi2024},
but none had produced the finite parts this computation requires.

The obstruction was removed in the companion Letter \cite{BCM}, which
introduced the curved-space counterpart of integration-by-parts identities:
exact finite-difference equations in the spectral labels of the lines, generated
by two local, coordinate-free identities, propagator contiguity and Green's
identity, shared by de Sitter and anti-de Sitter space. Where the flat-space
computation relies on differential equations in the masses, here the masses
enter through the spectral labels of the propagators, which can only be shifted
by integers, and the equations are finite-difference equations in those labels.
For the watermelon they give an exact mass-shift recurrence. Its physical
solution, selected by the appropriate boundary conditions, is a new
representation in which all ultraviolet poles sit in Gamma-function factors
while the remaining hypergeometric series are regular near four dimensions, so
that both the poles and the finite part are explicit. This makes possible, for
the first time, a computation in de Sitter space parallel to the flat-space one;
the same equations hold in anti-de Sitter space, which we do not treat here.

In this paper we carry out that computation for the $O(N)$-symmetric scalar
theory on four-dimensional de Sitter space. We give the complete derivation of
the watermelon on the sphere, including a proof that the boundary conditions
select the physical amplitude, and its expansion near four dimensions. The
calculation is performed in dimensional regularisation and includes the
complete subtraction of subdivergences. A feature without flat-space analogue
is that, when physical masses and curvature are held fixed, the spectral labels
of the propagators depend on the regulator; we show that this dependence
modifies the finite part of the potential while leaving its ultraviolet
structure unchanged. Our main results are:
\begin{itemize}
\item the finite-difference equations satisfied by the watermelon on the
sphere, with complete proofs;
\item the physical solution of the recurrence, with a proof of the boundary
condition;
\item the dimensional expansion of the watermelon near four dimensions,
including its finite part;
\item the fully renormalised two-loop effective potential of the $O(N)$
model on de Sitter space, exact in masses and curvature;
\item the corresponding renormalisation-group functions.
\end{itemize}To our knowledge, this is the first exact two-loop effective potential in
four-dimensional de Sitter space.

The paper is organised as follows. In Section~\ref{sec:2} we introduce the
$O(N)$ model on the Euclidean de Sitter sphere and fix our notation.
Section~\ref{sec:1loop} reviews the one-loop effective potential. In
Section~\ref{sec:watermelon} we derive the finite-difference equations for the
watermelon and their exact solution. Section~\ref{sec:5} develops the
four-dimensional expansion and renormalisation, and presents the two-loop
effective potential. In Section~\ref{sec:6} we analyse the ultraviolet structure
and derive the beta functions. Technical details on the finite-difference
solution are collected in Appendix~\ref{sec:abel}.

\section{The \texorpdfstring{$O(N)$}{O(N)} model on the Euclidean de Sitter sphere}\label{sec:2}
Let $M_{d+1}$ be real $(d+1)$-dimensional Minkowski spacetime and
$M_{d+1}^{(c)}$ its complexification. In a chosen Lorentz frame,
\begin{equation}
 Z_1\cdot Z_2=Z_1^0Z_2^0-Z_1^1Z_2^1-\cdots-Z_1^dZ_2^d.
\end{equation}
The complex de Sitter manifold is the hyperboloid
\begin{equation}
 dS_d^{(c)}=\{Z\in M_{d+1}^{(c)}:Z\cdot Z=-R^2\}.
\end{equation}
We initially set $R=1$; physical units are restored in
Section~\ref{sec:5}. The invariant complex variable is
\begin{equation}z=Z_1\cdot Z_2.\end{equation}
The future and past tuboids are
\begin{equation}\label{s.2.1}
 \TT_\pm=\{Z={x+iy}\in dS_d^{(c)}:{y}\in\pm V_+\},
\end{equation}
where $V_+$ is the ambient future timelike cone. The de Sitter spacetime
has no global timelike Killing vector; the spectral condition cannot be formulated in the usual way but may be replaced 
by the  normal analyticity property~\cite{bgm,bm}: the two-point distributions
are boundary values of functions holomorphic in $\TT_-\times\TT_+$.
Together with de Sitter invariance and the canonical commutation relations,
this fixes the massive scalar two-point function~\cite{bgm,bm,CEM2024}, with $h=\frac{d-1}{2}$:
\begin{align}
 W^d_\nu(Z_1,Z_2)
 &=\frac{\Gamma({h}+i\nu)\Gamma({h}-i\nu)}
 {2(2\pi)^{d/2}}(z^2-1)^{-\frac{d-2}{4}}
 P_{-\frac12+i\nu}^{-\frac{d-2}{2}}(z)\label{leg2p}\\
 &=\frac{\Gamma({h}+i\nu)\Gamma({h}-i\nu)}
 {(4\pi)^{d/2}\Gamma(d/2)}
 {}_2F_1\!\left({h}+i\nu,{h}-i\nu;
 \frac d2;\frac{1-z}{2}\right).\label{wig}
\end{align}
The spectral parameter is related to the squared mass by
\begin{equation}\label{cmass}{M^2=h^2+\nu^2}.\end{equation}
For real positive squared mass, $\nu$ is either real (the principal series)
or purely imaginary with $|\nu|<{h}$ (the complementary series).
A discrete series of acceptable quantum field theories with negative
squared mass also exists~\cite{Tachyons1,Tachyons2}.

The Schwinger function is the restriction to the Euclidean sphere
\begin{equation}\label{s.2}
 S^d=\{Z\in dS_d^{(c)}:Z^0=iX^0,\ Z^i=X^i,\ X^\mu\in\mathbb R\}.
\end{equation}
On this section, $X\in S^d\subset\mathbb R^{d+1}$ and 
\begin{equation}Z_1\cdot Z_2=-X_1\cdot_E X_2=-\cos s,\end{equation}
where $s$ is the geodesic separation. In the following we use the notations
\begin{equation}
 {q=i\nu,\qquad u=X_1\cdot_E X_2=\cos s.}
\end{equation}
With all the above understood the propagator is
\begin{equation}\label{leg6i}
 G^{(d)}_{{q}}({u}) = \frac{\Gamma({h+q})\Gamma({h-q})}{2 (2\pi)^{d/2}}\,
 {(1-u^2)^{-\frac{d-2}{4}}}\,{\bf P}_{{-\frac12+q}}^{-\frac{d-2}2}({-u}),
\end{equation}
where {${\bf P}^\alpha_\beta(x)$} is the so called ``Legendre function on the cut''~\cite{bateman} or Ferrers function of the first kind.\footnote{It is important to keep in mind that Ferrers functions {${\bf P}^\alpha_\beta(x)$} and Legendre functions {$P^\alpha_\beta(x)$} are holomorphic in
different cut-planes; as regards Ferrers function this is
\begin{equation}
\Delta_2 = {\bf C}\setminus {\{(-\infty,-1]\cup [1,\infty)\}}.
\label{d.20}\end{equation}}
The Euclidean propagator is equivalently written as follows:
\begin{equation}\label{eq:Euclidean-propagator}
 G_q^{(d)}(u)=
 \frac{\Gamma(h+q)\Gamma(h-q)}{(4\pi)^{d/2}\Gamma(d/2)}
 {}_2F_1\!\left(h+q,h-q;\frac d2;\frac{1+u}{2}\right),
 \qquad M^2=h^2-q^2.
\end{equation}
When the dimension is understood, we abbreviate $G_q^{(d)}$ to $G_q$.

Consider a scalar field $\phi:S^d\to\mathbb R^N$ with action
\begin{equation}
 S[\phi]=\int_{S^d}\left[
 \frac12\langle\partial_\mu\phi\mid\partial^\mu\phi\rangle
 +\frac{m_0^2+\xi\Rcal}{2}\langle\phi\mid\phi\rangle
 +\frac{c_0}{4}\langle\phi\mid\phi\rangle^2+\Lcal_g
 \right]\,{\dd X},
\end{equation}
where $m_0,c_0$ are bare parameters, {$\dd X$ is the invariant measure on $S^d$,} and
\begin{equation}
 \Lcal_g=\rho+\kappa\Rcal+
 \left(b_R+b_E\frac{(d-2)(d-3)}{d(d-1)}\right)\Rcal^2.
\end{equation}
We expand about $\ovl\phi=\varphi{\mathbf u_0}$, where $\varphi$ is constant and
{$\mathbf u_0$} is a unit vector. In an orthonormal basis {$\{\mathbf u_j\}_{j=0}^{N-1}$ of $\mathbb R^N$ whose first element is $\mathbf u_0$},
\begin{equation}
 \phi=(\varphi+\psi_0){\mathbf u_0}+\sum_{j=1}^{N-1}\psi_j{\mathbf u_j}.
\end{equation}
The action becomes~\cite{CEM2024}
\begin{equation}
 S[\phi]=\int_{S^d}\left[\frac{m_0^2+\xi\Rcal}{2}\varphi^2
 +\frac{c_0}{4}\varphi^4+\Lcal_g\right]\,{\dd X}
 +\sum_{a=1}^4S_a[\psi;\varphi],
\end{equation}
with
\begin{align}
 S_1&=\int_{S^d}\big[(m_0^2+\xi\Rcal)\varphi\psi_0
 +c_0\varphi^3\psi_0\big]\,{\dd X},\\
 S_2&=\frac12\int_{S^d}\left[
 \sum_{j=0}^{N-1}\partial_\mu\psi_j\partial^\mu\psi_j
 +M_0^2\psi_0^2+M_1^2\sum_{j=1}^{N-1}\psi_j^2
 \right]\,{\dd X},\\
 S_3&=\int_{S^d}c_0\varphi\psi_0\sum_{j=0}^{N-1}\psi_j^2\,{\dd X},\\
 S_4&=\frac{c_0}{4}\int_{S^d}\left(\sum_{j=0}^{N-1}\psi_j^2\right)^2
 \,{\dd X},
\end{align}
where
\begin{equation}
 M_0^2=m_0^2+\xi\Rcal+3c_0\varphi^2,\qquad
 M_1^2=m_0^2+\xi\Rcal+c_0\varphi^2.
\end{equation}
These are the radial and transverse Hessian masses. At a general background,
the transverse modes need not be massless Goldstone modes.

For a constant configuration, the Legendre transform gives the effective
potential in the form
\begin{align}\label{313}
 V(\varphi)={}&\Lcal_g+\frac{m_0^2+\xi\Rcal}{2}\varphi^2
 +\frac{c_0}{4}\varphi^4
 -\frac1{V_d}\log\int\!\left[\prod_jD\psi_j\right]e^{-S_2}
 \nonumber\\
 &-\frac1{V_d}\left.\log
 \frac{\int[\prod_jD\psi_j]e^{-S_2-S_3-S_4}}
 {\int[\prod_jD\psi_j]e^{-S_2}}\right|_{\mathrm{1PI}}.
\end{align}
Here $V_d$ is the volume of $S^d$. The effective action is proportional to
$V_d$, and $V$ is the action per unit volume. The formal measure is defined
through dimensional regularisation. Its normalisation cancels in the ratio;
any remaining field-independent constant is included in the vacuum-energy
normalisation. The Gaussian integral gives the one-loop determinants. The
logarithm of the ratio generates connected vacuum diagrams, while the
Legendre transform removes $S_1$ and the one-particle-reducible contributions.
At two loops the retained topologies are the figure-eight generated by $S_4$
and the watermelon generated by two cubic vertices.

\section{The one-loop bare potential}\label{sec:1loop}
The tree-level plus one-loop potential can be written as
\begin{equation}
 V_{1\text{-loop}}(\varphi)=\Lcal_g+\frac{m_0^2+\xi\Rcal}{2}\varphi^2
 +\frac{c_0}{4}\varphi^4+V_0(\varphi)+(N-1)V_1(\varphi).
\end{equation}
For a mode of mass $M$,
\begin{equation}
 V(M)=\frac1{2V_d}\log\det(-\Delta+M^2),\qquad
 \frac{\partial V}{\partial M^2}
 =\frac1{2V_d}\operatorname{Tr}(-\Delta+M^2)^{-1}=\frac12T_d(q).
\end{equation}
The unregularised determinant is divergent. Its mass derivative, however,
converges for ${\rm Re} (d)<2$, where the propagator is finite at coincidence.
In that domain the tadpole equation~\cite{ACM} determines the mass dependence
of $V_j$. Dimensional regularisation continues the result to $d$ near four.
A mass-independent curvature normalisation remains to be specified.
At coincidence $u=1$, Gauss's summation theorem applied to
Eq.~\eqref{eq:Euclidean-propagator} gives
\begin{equation}\label{eq:tadpole}
 T_d(q)=\frac{\Gamma(1-d/2)}{(4\pi)^{d/2}}
 \frac{\Gamma(h+q)\Gamma(h-q)}
 {\Gamma(1/2+q)\Gamma(1/2-q)}.
\end{equation}
We write $T_q=T_d(q)$ when the dimension is understood. In the
$\nu$-parameterisation of the one-loop integration below, the same
quantity is $T_d(i\nu)$.
Its meromorphic continuation defines the dimensionally regularised tadpole.
For generic non-even $d$, the short-distance propagator splits into a
singular Frobenius branch proportional to $r^{2-d}$ and a branch analytic
at $r=0$; $T_d$ is the constant coefficient of the latter. Their resonance
at even dimensions is reflected in the poles of $\Gamma(1-d/2)$.
Near four dimensions the pole is proportional to
$\nu^2+1/4=M^2-\Rcal/6$ at $d=4$ in unit-radius variables.
Since $\dd M^2=2\nu\,\dd\nu$ at fixed $d$, integration gives
\begin{equation}\label{325}
 V_j(\varphi)={\frac{\Gamma(\tfrac 12-h)}{(4\pi)^{d/2}}}
 \int^{\nu_j(\varphi)}
 \frac{\Gamma({h}+ix)\Gamma({h}-ix)}
 {\Gamma(\frac12+ix)\Gamma(\frac12-ix)}x\,\dd x,
\end{equation}
where $\nu_j(\varphi)=\sqrt{M_j^2(\varphi)-{h^2}}$.
For $d=4-2\eps$, at fixed $\nu$,
\begin{align}\label{T4}
 T_{4-2\eps}(i\nu)={}&-
 \frac{(1+4\nu^2)(\eps^{-1}+1-\gamma_E+\log4\pi)}{64\pi^2}
 \nonumber\\
 &+\frac{(\nu^2+1/4)
 [\psi(-\frac12-i\nu)+\psi(-\frac12+i\nu)]}{16\pi^2}+O(\eps).
\end{align}
Integrating yields
\begin{align}\label{3.30}
 V_j^{(4)}(\varphi)={}&
 \left[-\frac{(x^4+x^2/2)(\eps^{-1}-\gamma_E+1+\log4\pi)
 +(x^2+1/4)^2}{64\pi^2}\right.\nonumber\\
 &\left.\quad+\frac{(x^2+1/4)^2}{64\pi^2}
 [\psi(-\tfrac12+ix)+\psi(-\tfrac12-ix)]
 +\frac{B(x)}{16\pi^2}\right]_{x=\sqrt{M_j^2(\varphi)-9/4}}+O(\eps),
\end{align}
where
\begin{align}\label{Bfunction}
 B(x)&=\frac{(x^2+1/4)^2}{4}
 -\int^x\frac i4(y^2+1/4)^2
 [\psi'(-\tfrac12+iy)-\psi'(-\tfrac12-iy)]\,\dd y\nonumber\\
 &=-\frac12\int^{x^2+1/4}\!\dd y\,y^2
 \left[\sum_{n=0}^\infty\frac{2n-1}{2(n(n-1)+y)^2}-\frac1y\right].
\end{align}
An alternative expression in terms of antiderivatives of $\psi$ is
\cite{VicenteGarcia-Consuegra:2025two}
\begin{align}
 B(x)={}&-\frac{ix(4x^2+1)}2
 [\log\Gamma(-\tfrac12-ix)-\log\Gamma(-\tfrac12+ix)]\nonumber\\
 &+12[\psi^{(-4)}(-\tfrac12-ix)+\psi^{(-4)}(-\tfrac12+ix)]\nonumber\\
 &+12ix[\psi^{(-3)}(-\tfrac12-ix)-\psi^{(-3)}(-\tfrac12+ix)]\nonumber\\
 &-\frac{12x^2+1}{2}
 [\psi^{(-2)}(-\tfrac12-ix)+\psi^{(-2)}(-\tfrac12+ix)]\nonumber\\
 &+\frac{(4x^2+1)^2}{32}
 [\psi(-\tfrac12-ix)+\psi(-\tfrac12+ix)].
\end{align}
The antiderivative constants are fixed consistently with Eq.~\eqref{Bfunction}.
\begin{remark}
The one-loop calculation was considered in Ref.~\cite{ACM} at fixed $H$.
Here we hold the field-dependent masses and $\Rcal$ fixed during dimensional
continuation. Then $H^2=\Rcal/[d(d-1)]$ and the spectral labels depend on
$\eps$. We track this dependence explicitly below. At one loop it changes
only the polynomial part of the potential in the prescription used here.
\end{remark}
For the two-loop potential we also require the watermelon integral.
Although a closed-form representation was obtained in Ref.~\cite{CEM2024},
extracting its four-dimensional finite part was difficult. The
finite-difference method of Ref.~\cite{BCM} provides the representation
used in the next section.

\section{The watermelon from finite-difference equations}\label{sec:watermelon}
On the unit Euclidean sphere, define
\begin{equation}
 \I(\bq)=\int_{S^d}G_{q_1}(X,Y)G_{q_2}(X,Y)G_{q_3}(X,Y)\,\dd Y,
 \qquad \bq=(q_1,q_2,q_3),
\end{equation}
where $G_q$ is the propagator defined in
Eq.~\eqref{eq:Euclidean-propagator}, with $u=X\cdot_E Y$.
We retain $h=(d-1)/2$ and introduce
\begin{equation}\label{eq:conv}
 a=h-1,\qquad \mu_q=h^2-q^2,\qquad \kappa_q=a^2-q^2.
\end{equation}
Our first aim is the fixed-label Laurent expansion
\begin{equation}\label{eq:laurent}
 (16\pi^2)^2\I(\bq)=\frac{C_{-2}(\bq)}{\eps^2}
 +\frac{C_{-1}(\bq)}{\eps}+C_0(\bq)+O(\eps),\qquad d=4-2\eps.
\end{equation}
Fixed labels do not imply fixed masses, since $\mu_q=h^2-q^2$ depends on $d$.
The tadpole $T_q=T_d(q)$, defined in Eq.~\eqref{eq:tadpole}, obeys
\begin{equation}\label{eq:Tshift}
 T_{q+1}=\frac{h+q}{q-a}\,T_q.
\end{equation}
We denote the canonical basis of $\mathbb R^3$ by $\{\ee_1,\ee_2,\ee_3\}$.
All identities are initially understood away from propagator poles and
poles of the separated representation. Removable singularities are taken
in the combined expression.

\subsection{The recurrence and its solution}\label{sec:rec}
\begin{proposition}[Mass-shift recurrence]\label{prop:rec}
For $\{i,j,k\}=\{1,2,3\}$,
\begin{equation}\label{eq:rec}
 P_k^-\I(\bq+\ee_k)-P_k^+\I(\bq-\ee_k)=\mathcal S_k,
\end{equation}
where
\begin{equation}\label{eq:PR}
 P_k^\pm=\prod_{\epsilon_1,\epsilon_2=\pm1}
 (a\pm q_k+\epsilon_1q_i+\epsilon_2q_j)
 {{}=\lambda-2a(a\pm q_k)\,\frac{\partial\lambda}{\partial\kappa_k},}
\end{equation}
{with
$\lambda=\kappa_1^2+\kappa_2^2+\kappa_3^2-2(\kappa_1\kappa_2+\kappa_2\kappa_3+\kappa_3\kappa_1)$
the K\"all\'en function of $\kappa_i=\kappa_{q_i}$,}
and
\begin{equation}\label{eq:RR}
 \mathcal S_k=2q_k(2h-1)\left[
 2T_iT_j-(T_i+T_j)T_k
 +\frac{\kappa_i-\kappa_j}{\kappa_k}(T_i-T_j)T_k\right].
\end{equation}
Here $T_i=T_{q_i}$.
\end{proposition}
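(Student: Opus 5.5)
The plan is to derive \eqref{eq:rec} from two local identities for the propagator, \emph{propagator contiguity} and \emph{Green's identity}, as the Introduction indicates. All manipulations will be carried out for $\mathrm{Re}\,d<2$. In that range $G_q(u)$ is bounded at $u=1$, with $G_q(1)=T_q$ by Gauss's theorem as in Eq.~\eqref{eq:tadpole}. Products of propagators are then integrable, integrations by parts on $S^d$ produce no singular boundary terms, and every contact term is finite. The identity \eqref{eq:rec} is a relation between meromorphic functions of $d$ and of the labels, so it will follow for $d=4-2\eps$ by analytic continuation, away from propagator poles.

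\textbf{Step 1 (contiguity).} From Eq.~\eqref{eq:Euclidean-propagator} and Gauss's contiguous relations for ${}_2F_1(h+q,h-q;d/2;x)$, I will derive first-order relations of the form
\[
G_{q\pm1}(u)=\alpha_\pm(q)\,G_q(u)+\beta_\pm(q)\,(1-u^2)\,\partial_u G_q(u)+\gamma_\pm(q)\,u\,G_q(u).
\]
Here the coefficients are rational in $q$ and $h$. They are normalised so that the prefactor $\Gamma(h+q)\Gamma(h-q)$ shifts compatibly with Eq.~\eqref{eq:Tshift}. The precise basis matters less than the fact that $G_{q\pm1}$ lies in the span of $G_q$ and $\nabla G_q$ with polynomial coefficients in $u$. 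Evaluating these relations at $u=1$ gives a consistency check against Eq.~\eqref{eq:Tshift}.

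\textbf{Step 2 (Green's identity).} Substituting into $\I(\bq\pm\ee_k)$ expresses both shifted watermelons through $\I(\bq)$ and a small set of auxiliary integrals:
\begin{itemize}
\item $\int G_iG_j\,(1-u^2)\partial_uG_k$,
\item $\int u\,G_iG_jG_k$,
\item the gradient integrals $\int \nabla G_a\cdot\nabla G_b\,G_c$.
\end{itemize}
I will eliminate the auxiliary integrals using $(-\Delta_Y+\mu_q)G_q=\delta_X$ together with Green's identity applied to $\int \Delta(G_aG_b)\,G_c$ and its permutations. Each use of the equation of motion releases a contact term $\delta_X(Y)$. Multiplied by the remaining two propagators at coincidence, this gives $T_aT_b$; this is the origin of the terms $2T_iT_j-(T_i+T_j)T_k$ in \eqref{eq:RR}. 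Terms in which a derivative is transferred to a single propagator and integrated over the whole sphere produce $\int_{S^d} G_q=1/\mu_q$-type factors or, after one contiguity shift, denominators $\kappa_k=a^2-q_k^2$. This accounts for the structure $(\kappa_i-\kappa_j)(T_i-T_j)T_k/\kappa_k$. The factor $2q_k(2h-1)$ should come from the antisymmetric combination $\alpha_+-\alpha_-$ together with the $d$-dependence of the Laplacian acting on zonal functions.

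\textbf{Step 3 (closing the system).} The result of Steps 1 and 2 is a linear system in the unknowns $\I(\bq+\ee_k)$, $\I(\bq)$, $\I(\bq-\ee_k)$ and the auxiliary integrals. I will form the combination $P_k^-\I(\bq+\ee_k)-P_k^+\I(\bq-\ee_k)$ and show that the coefficients of $\I(\bq)$ and of every auxiliary integral vanish. The mass relations $\mu_{q\pm1}-\mu_q=\mp(2q\pm1)$ will reduce this to polynomial identities in $q_i,q_j,q_k,a$. The factorised form of $P_k^\pm$ as a product of four linear factors, equivalently $\lambda-2a(a\pm q_k)\partial\lambda/\partial\kappa_k$, should emerge as the determinant of the $2\times2$ elimination. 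I would verify this through the Källén function symmetry under $q_i\to-q_i$.

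\textbf{Main obstacle.} I expect the hardest step to be this closure: choosing the contiguity basis so that the number of independent auxiliary integrals is small enough to eliminate, and organising the polynomial algebra so that $\I(\bq)$ drops out exactly. If direct elimination proves unwieldy, I would first verify \eqref{eq:rec} at $d=1$ or $d=3$, where the propagators are elementary, to fix the coefficients. I would then argue by rationality in $h$ that the same coefficients hold for general $d$.
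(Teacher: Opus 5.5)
Your plan has a genuine gap at the point you yourself flag as the main obstacle: the identities you propose cannot close the system. Take $k=3$ and use contiguity in $q_3$ in the paper's normalisation, $(q-a)G_{q+1}=(q+h)uG_q-(1-u^2)G_q'$ and $(q+a)G_{q-1}=(q-h)uG_q+(1-u^2)G_q'$, so that $\I(\bq)$ never appears. The combination then becomes
\[
P_3^-\I(\bq+\ee_3)-P_3^+\I(\bq-\ee_3)=\omega_d\int_{-1}^1 w_d\big[A\,uG_{q_1}G_{q_2}G_{q_3}+B\,(1-u^2)G_{q_1}G_{q_2}G_{q_3}'\big]\,\dd u ,
\]
with $A,B$ generically nonzero; for instance, at $q_1=q_2=0$ one finds $B=-2q_3(q_3^2+3a^2)$. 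Your Green's identities $\int[\Delta(G_aG_b)G_c-G_aG_b\Delta G_c]=$ (contact terms) are, in radial form, the total derivatives of $(1-u^2)^{d/2}G_i'G_jG_k$. These derivatives equal $w_d[\mu_iG_1G_2G_3+(1-u^2)G_i'(G_jG_k)']$, so the identities relate only $\I(\bq)$ and the unweighted gradient integrals $\int\nabla G_a\cdot\nabla G_b\,G_c$. They never produce $\int uG_1G_2G_3$ or $\int(1-u^2)G_1G_2G_3'$. The two sets of unknowns therefore decouple, and your Step~3 cannot make the coefficients of the auxiliary integrals vanish. Re-expressing $uG_q$ or $(1-u^2)G_q'$ through contiguity is circular, since it just reproduces the shifted watermelons.

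The missing ingredient is a family of identities in which $u=X\cdot_EY$ itself enters, through $\Delta_Yu=-d\,u$ and $\nabla\nabla u=-u\,g$. These are the divergences of $G_1G_2G_3\nabla u$, of $u\,G_jG_k\nabla G_i$, and of $G_k(\nabla G_i\cdot\nabla G_j)\nabla u$. They bring in the further integrals $\int u(1-u^2)G_i'G_j'G_k$ and $\int(1-u^2)^2G_1'G_2'G_3'$, and one must show that the enlarged system still closes. The paper does this in one stroke by exhibiting the current $J(u)$ with constant coefficients $b_0,b_i,c_{ij}$. The resulting overdetermined linear system is solved, for the stated $P_3^\pm$, by explicit coefficients. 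The factor $\kappa_3^{-1}$ in $\mathcal S_3$ comes from $b_1,b_2$; it does not come from $\int_{S^d}G_q=1/\mu_q$, which plays no role. The source $\mathcal S_3=b_1T_2T_3+b_2T_1T_3+b_3T_1T_2$ then arises as the boundary flux $\lim_{u\to1}(1-u^2)w_dG_q'=1/\omega_d$ for $0<\mathrm{Re}\,d<2$, the two-derivative terms vanishing there. Your fallback does not repair the gap. Checks at finitely many dimensions can at best suggest coefficients; they cannot prove an identity between these transcendental functions of $d$ for all $d$. Moreover, at $d=3$ the watermelon itself has a pole, as the factor $\Gamma(2-2h)$ in $b_h$ shows.
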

\begin{proof}
It suffices to take $(i,j,k)=(1,2,3)$; the other channels follow by
permutation. Primes denote derivatives with respect to $u$. The proof has
three steps: the shifted integrals are reduced to a single radial integral,
its integrand is shown to be a total derivative, and the boundary values of
the corresponding current are evaluated.

\emph{Step 1: reduction to one radial integral.} The integrand of $\I$
depends on $Y$ only through $u=X\cdot_EY$, so that
\begin{equation}\label{eq:weights}
 \I(\bq)=\omega_d\int_{-1}^1w_d(u)\,G_{q_1}G_{q_2}G_{q_3}\,\dd u,\qquad
 \omega_d=\frac{2\pi^{d/2}}{\Gamma(d/2)},\qquad
 w_d(u)=(1-u^2)^{d/2-1},
\end{equation}
with $\omega_d$ the volume of the unit $S^{d-1}$. The contiguity relations
\begin{align}\label{eq:ladders}
 (q-a)G_{q+1}&=(q+h)uG_q-(1-u^2)G_q',\nonumber\\
 (q+a)G_{q-1}&=(q-h)uG_q+(1-u^2)G_q',
\end{align}
which follow from the hypergeometric contiguous relations and the
normalisation of the coincidence behaviour, express $G_{q_3\pm1}$ through
$G_{q_3}$ and $G_{q_3}'$. Inserting them in Eq.~\eqref{eq:weights}, gives
\begin{equation}\label{eq:AB-step}
 P_3^-\I(\bq+\ee_3)-P_3^+\I(\bq-\ee_3)
 =\omega_d\int_{-1}^1w_d(u)\big[
 A\,uG_{q_1}G_{q_2}G_{q_3}
 +B\,(1-u^2)G_{q_1}G_{q_2}G_{q_3}'\big]\,\dd u,
\end{equation}
where
\begin{equation}
 A=\frac{(q_3+h)P_3^-}{q_3-a}-\frac{(q_3-h)P_3^+}{q_3+a},\qquad
 B=-\frac{P_3^-}{q_3-a}-\frac{P_3^+}{q_3+a}.
\end{equation}

\emph{Step 2: the integrand is a total derivative.} Consider the current
\begin{align}\label{eq:certJ}
 J(u)=w_d(u)(1-u^2)\Bigg[&b_0G_{q_1}G_{q_2}G_{q_3}
 +u\sum_i b_iG_{q_i}'\prod_{j\ne i}G_{q_j}\nonumber\\
 &+(1-u^2)\sum_{i<j}c_{ij}G_{q_i}'G_{q_j}'G_{q_k}\Bigg],
\end{align}
with constant coefficients $b_0$, $b_i$, $c_{ij}=c_{ji}$, and $k$ the third
index. We differentiate $J$ and eliminate the second derivatives with the
propagator equation, valid away from coincidence,
\begin{equation}
 (1-u^2)G_q''-d\,uG_q'-\mu_qG_q=0 .
\end{equation}
Then $J'$ is $w_d$ times a combination of $uG_{q_1}G_{q_2}G_{q_3}$, of the
products with one, two and three derivatives, and it equals the integrand of
Eq.~\eqref{eq:AB-step} if and only if
\begin{gather}
 -d\,b_0+\sum_i\mu_{q_i}b_i=A,\qquad
 b_0+b_i+\sum_{j\ne i}\mu_{q_j}c_{ij}=B\,\delta_{i3},\nonumber\\
 (d-2)c_{ij}+b_i+b_j=0,\qquad \sum_{i<j}c_{ij}=0 .
\end{gather}
The first two conditions match the terms without derivatives and with one
derivative; the last two cancel the terms with two and three derivatives.
The system is overdetermined. With $P_3^\pm$ given by Eq.~\eqref{eq:PR}, one
verifies by direct substitution that it is solved by
\begin{align}\label{eq:bs}
 b_0&=2q_3(\kappa_3-\kappa_1-\kappa_2),\nonumber\\
 b_1&=2q_3(d-2)\left(-1-\frac{\kappa_1-\kappa_2}{\kappa_3}\right),&
 b_2&=2q_3(d-2)\left(-1+\frac{\kappa_1-\kappa_2}{\kappa_3}\right),\nonumber\\
 b_3&=4q_3(d-2),&c_{ij}&=\frac{b_k}{d-2}\quad(\{i,j,k\}=\{1,2,3\}).
\end{align}
Hence
\begin{equation}
 P_3^-\I(\bq+\ee_3)-P_3^+\I(\bq-\ee_3)
 =\omega_d\big[J(1^-)-J(-1^+)\big].
\end{equation}

\emph{Step 3: boundary values.} At $u=-1$ the propagators are regular, by
Eq.~\eqref{eq:Euclidean-propagator}, and the prefactor $w_d(1-u^2)$
vanishes, so that $J(-1^+)=0$. At $u=1$ we first take $0<{\rm Re}(d)<2$. There
$G_q(1)=T_q$ is finite, and the singular branch of each propagator gives the
mass-independent flux
\begin{equation}\label{eq:flux}
 \lim_{u\to1^-}(1-u^2)w_d(u)G_q'(u)=\frac1{\omega_d}.
\end{equation}
In $J$ the term without derivatives vanishes at $u=1$ because of its
prefactor; the terms with two derivatives scale as $(1-u)^{1-d/2}$ and vanish
as well; each term with one derivative tends to $b_iT_jT_k/\omega_d$.
Therefore
\begin{align}
 \omega_d\,J(1^-)&=b_1T_2T_3+b_2T_1T_3+b_3T_1T_2
\cr & =2q_3(d-2)\left[2T_1T_2-(T_1+T_2)T_3
 +\frac{\kappa_1-\kappa_2}{\kappa_3}(T_1-T_2)T_3\right],
\end{align}
which is $\mathcal S_3$, since $d-2=2h-1$. Meromorphic continuation in $d$
establishes Eq.~\eqref{eq:rec} in general.
\end{proof}

To solve the homogeneous equation, define
\begin{equation}\label{eq:H}
 \mathcal H_k(\bq)=\prod_{\boldsymbol\epsilon}
 \frac{\Gamma(B_{\boldsymbol\epsilon})}{\Gamma(A_{\boldsymbol\epsilon})},\quad
 B_{\boldsymbol\epsilon}=\frac{h+q_k+\epsilon_1q_i+\epsilon_2q_j}{2},\quad
 A_{\boldsymbol\epsilon}=B_{\boldsymbol\epsilon}-a.
\end{equation}
The product runs over four sign choices. Since
$\Gamma(B+1/2)/\Gamma(B-1/2)=B-1/2$,
\begin{equation}
 \frac{\mathcal H_k(\bq+\ee_k)}{\mathcal H_k(\bq-\ee_k)}
 =\prod_{\boldsymbol\epsilon}
 \frac{B_{\boldsymbol\epsilon}-1/2}{A_{\boldsymbol\epsilon}-1/2}
 =\frac{P_k^+}{P_k^-}.
\end{equation}
Thus the general homogeneous solution in channel $k$ is $\mathcal H_kp$,
where $p$ has period two in $q_k$.
Let $\bq_n=\bq+2n\ee_k$ and $\widehat\bq_n=\bq+(2n+1)\ee_k$.
Dividing the recurrence by $\mathcal H_k$ gives
\begin{equation}
 \frac{\I(\bq_{n+1})}{\mathcal H_k(\bq_{n+1})}
 -\frac{\I(\bq_n)}{\mathcal H_k(\bq_n)}
 =\frac{\mathcal S_k(\widehat\bq_n)}
 {P_k^+(\widehat\bq_n)\mathcal H_k(\bq_n)}.
\end{equation}
Telescoping yields
\begin{equation}\label{eq:solN}
 \I(\bq)=\mathcal H_k(\bq)
 \frac{\I(\bq_{N+1})}{\mathcal H_k(\bq_{N+1})}
 -\sum_{n=0}^N\frac{\mathcal H_k(\bq)}{\mathcal H_k(\bq+2n\ee_k)}
 \frac{\mathcal S_k(\bq+(2n+1)\ee_k)}{P_k^+(\bq+(2n+1)\ee_k)}.
\end{equation}
The infinite-lattice boundary value is essential. Fix $k=3$, write
$\zeta=q_3$, and set
\begin{equation}\label{wm:eq:bd}
 b_h=\frac{\Gamma(h-1/2)^2\Gamma(3/2-h)\Gamma(2-2h)}
 {(4\pi)^{2h+1}\Gamma(h+1/2)}.
\end{equation}
This is the coefficient of the flat-space sunset with two massless lines
and one line of mass $m$, namely $b_h(m^2)^{d-3}$.
\begin{lemma}
For $2<d<3$ and generic $\zeta$,
\begin{equation}\label{wm:eq:cc}
 \I(1/2,1/2,\zeta)=b_h
 \frac{\Gamma(h+\zeta)\Gamma(h-\zeta)}
 {\Gamma(2-h+\zeta)\Gamma(2-h-\zeta)}.
\end{equation}
\end{lemma}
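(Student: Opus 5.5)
The key observation is that the label $q=1/2$ is the conformally coupled massless field, whose propagator collapses to a pure power. In Eq.~\eqref{eq:Euclidean-propagator} with $q=1/2$ we have $h+1/2=d/2$, so the hypergeometric function is ${}_2F_1(d/2,h-\tfrac12;d/2;x)=(1-x)^{-(h-1/2)}$. With $x=(1+u)/2$ this gives
\begin{equation*}
 G_{1/2}(u)=\frac{\Gamma(d/2-1)}{(4\pi)^{d/2}}\left(\frac{2}{1-u}\right)^{d/2-1}.
\end{equation*}
I will use $\Gamma(h+\tfrac12)\Gamma(h-\tfrac12)/\Gamma(d/2)=\Gamma(h-\tfrac12)$ and $h-\tfrac12=d/2-1$. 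Hence $G_{1/2}^2\propto(1-u)^{2-d}$. Inserting this into the radial form~\eqref{eq:weights} reduces $\I(1/2,1/2,\zeta)$ to a single integral of one hypergeometric function against powers of $u$.

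Next I change variable to $x=(1+u)/2$. Then $1-u=2(1-x)$ and $w_d=(4x(1-x))^{d/2-1}$, so the integrand becomes, up to explicit constants, $x^{d/2-1}(1-x)^{1-d/2}\,{}_2F_1(h+\zeta,h-\zeta;d/2;x)$. This is precisely the Euler-type integral
\begin{equation*}
 \int_0^1x^{c-1}(1-x)^{s-1}{}_2F_1(\alpha,\beta;c;x)\,\dd x
 =\frac{\Gamma(c)\Gamma(s)\Gamma(c+s-\alpha-\beta)}{\Gamma(c+s-\alpha)\Gamma(c+s-\beta)},
\end{equation*}
with $c=d/2$, $s=2-d/2$, $\alpha=h+\zeta$ and $\beta=h-\zeta$. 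The identity follows by integrating the hypergeometric series termwise and applying Gauss's theorem. Since $c+s=2$, the output is $\Gamma(d/2)\Gamma(2-d/2)\Gamma(3-d)/[\Gamma(2-h-\zeta)\Gamma(2-h+\zeta)]$. The factor $\Gamma(h+\zeta)\Gamma(h-\zeta)$ comes from the normalisation of $G_\zeta$. This already reproduces the $\zeta$-dependence in Eq.~\eqref{wm:eq:cc}, and $\Gamma(3-d)=\Gamma(2-2h)$ is the factor appearing in $b_h$.

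The step I expect to require care is justifying the formula in the stated range. At $u=1$, where $x\to1$, the integrand behaves like $(1-x)^{d/2-1}(1-x)^{2-d}(1-x)^{1-d/2}=(1-x)^{2-d}$, because the singular branch of $G_\zeta$ behaves like $(1-x)^{1-d/2}$. This is integrable exactly when $d<3$, which is the condition $\operatorname{Re}(c+s-\alpha-\beta)>0$ needed for Gauss summation, together with $\operatorname{Re}s>0$, i.e.\ $d<4$. At $x=0$ the exponent is $d/2-1>-1$. The lower bound $d>2$ keeps $\Gamma(d/2-1)$ and the setup away from the $d\le2$ degeneracies. Convergence of the termwise integration then follows by dominated convergence, using positive coefficients for real labels and analytic continuation in $\zeta$ otherwise.

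Finally I will collect the constants: $\omega_d$, $\Gamma(d/2-1)^2/(4\pi)^d$, the powers of $2$ from $2^{d-2}$ and from $w_d$, the Jacobian $2$, and $1/[(4\pi)^{d/2}\Gamma(d/2)]$ from $G_\zeta$. I will then check that their product equals $b_h$, using the Legendre duplication formula to trade $\Gamma(2-d/2)=\Gamma(3/2-h)$, $\Gamma(h-\tfrac12)^2$ and $\Gamma(h+\tfrac12)$ into the form of Eq.~\eqref{wm:eq:bd}. This bookkeeping is routine but error-prone.
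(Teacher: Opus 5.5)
Your proposal is correct and is essentially the paper's argument: the conformal propagator reduces to the pure power $t^{1-d/2}$, and the remaining integral is the Euler-type beta integral of ${}_2F_1$ against powers. The paper states that integral for a general power $t^\rho$ (Eq.~\eqref{wm:eq:powerres}, valid for ${\rm Re}\,\rho>-1$) and evaluates it at $\rho=2-d$; the condition $\rho>-1$ there is exactly your $d<3$. One small simplification: since $\omega_d2^{d-1}/(4\pi)^{d/2}=1/\Gamma(d/2)$, your constants collapse directly to $b_h$, so the duplication formula is not needed.
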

\begin{proof}
For $t=(1-u)/2$, the conformal propagator follows from
\begin{equation}
 {}_2F_1(h+1/2,h-1/2;d/2;(1+u)/2)=t^{1-d/2}.
\end{equation}
The beta integral gives, initially for ${\rm Re}(\rho)>-1$,
\begin{equation}\label{wm:eq:powerres}
 \int_{S^d}t^\rho G_\zeta(u)\,\dd Y
 =\frac{\Gamma(d/2+\rho)\Gamma(1+\rho)}{\Gamma(d/2)}
 \frac{\Gamma(h+\zeta)\Gamma(h-\zeta)}
 {\Gamma((d+1)/2+\rho+\zeta)\Gamma((d+1)/2+\rho-\zeta)}.
\end{equation}
Substitution of $\rho=2-d$ proves the result.
\end{proof}

\begin{lemma}\label{wm:lem:estimate}
Let $d$ be real, $2<d<3$, and fix nonpolar $q_1,q_2,\zeta$ with
${\rm Im}(\zeta)\ne0$. Then
\begin{equation}\label{wm:eq:estimate}
 \I(q_1,q_2,\zeta+2N)-\I(1/2,1/2,\zeta+2N)
 =O(N^{d-4}\log N).
\end{equation}
\end{lemma}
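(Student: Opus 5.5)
Write the difference as a single radial integral using the reduction in Eq.~\eqref{eq:weights}:
\begin{equation*}
 \I(q_1,q_2,\zeta+2N)-\I(1/2,1/2,\zeta+2N)
 =\omega_d\int_{-1}^1 w_d(u)\,\big[G_{q_1}G_{q_2}-G_{1/2}^2\big](u)\,G_{\zeta+2N}(u)\,\dd u .
\end{equation*}
The idea is that for large imaginary part of the label, here large $|\zeta+2N|$ with ${\rm Im}(\zeta)\ne0$ fixed, $G_{\zeta+2N}$ behaves like a heavy propagator of mass $M\sim 2N$. It is concentrated near coincidence $u=1$ on the scale $s\sim 1/N$ and decays exponentially away from it. Set $t=(1-u)/2$.

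\emph{Step 1: short-distance expansion of the light factor.} For $2<d<3$, expand $F(t)=G_{q_1}G_{q_2}-G_{1/2}^2$ using the Frobenius structure recalled after Eq.~\eqref{eq:tadpole}. Each propagator is $c\,t^{1-d/2}(1+O(t))+T_q(1+O(t))$, and the singular coefficient $c$ is mass independent; this is the flux statement Eq.~\eqref{eq:flux}. In the product, the leading $t^{2-d}$ term therefore cancels in the difference. The remainder is $F(t)=\alpha\,t^{1-d/2}+\beta\,t^{3-d}+O(t^{2-d/2})$, with $\alpha=c\,(T_{q_1}+T_{q_2}-2T_{1/2})$ and with $\beta$ built from the $O(t)$ corrections of the singular branches, which involve $\mu_{q}$.

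\emph{Step 2: integrate against the heavy propagator.} Apply the beta-integral formula Eq.~\eqref{wm:eq:powerres} term by term, with $\rho=1-d/2$, $3-d$, and so on. For each power one obtains
\begin{equation*}
 \frac{\Gamma(h+\zeta')\Gamma(h-\zeta')}{\Gamma((d+1)/2+\rho+\zeta')\Gamma((d+1)/2+\rho-\zeta')},\qquad \zeta'=\zeta+2N.
\end{equation*}
By Stirling this is of order $|\zeta'|^{2h-d-1-2\rho}=N^{-2-2\rho}$. For $\rho=1-d/2$ this gives $N^{d-4}$. For $\rho=3-d$ it gives $N^{2d-8}$, which is smaller. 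The $\log N$ is meant to cover the resonance between the analytic and singular branches and the uniform control of the tail. If a $t^{k}\log t$ term appears, its contribution follows by differentiating Eq.~\eqref{wm:eq:powerres} in $\rho$, and this produces exactly one factor of $\log N$.

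Since $\rho=1-d/2>-1$ requires $d<4$, all needed exponents lie in the convergent range when $2<d<3$. For the $t^{3-d}$ term, $3-d>0$, so it is fine as well.

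\emph{Step 3: the remainder, which is the main obstacle.} Expansion in powers of $t$ is only asymptotic near $t=0$. The remainder $R(t)$ of $F$ is bounded by $t^{2-d/2}$ near $0$, is smooth up to $t=1$ (where the propagators are regular), and may be singular only at $u=-1$ through its analytic dependence on $q_1,q_2$ (nonpolar). What must be shown is
\[
\int w_d\,R\,G_{\zeta'}=o(N^{d-4}),
\]
uniformly in $N$. My first attempt is to use the uniform large-parameter asymptotics of the hypergeometric/Ferrers function: $|G_{\zeta'}(u)|\lesssim N^{d-2}\,e^{-c N s}$-type Bessel bounds, with ${\rm Im}(\zeta)\ne0$ preventing oscillatory non-decay. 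These bounds should give decay on $s\gtrsim N^{-1}\log N$. Splitting the integral at $s=N^{-1}\log N$, the inner region then contributes $O(N^{-(d+2-d)}\cdots)$, computed by scaling $s\to s/N$, and the outer region is negligible.

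A cleaner alternative is a contour/Mellin argument. Represent $R$ by its Mellin transform in $t$; this expresses the integral as a Barnes integral of the Gamma ratio above. Shifting the contour past the poles at $\rho=1-d/2$ and $3-d$ gives the same expansion, with an explicitly controllable error. I would try this Mellin route first, since it reuses Eq.~\eqref{wm:eq:powerres} directly.

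The case ${\rm Im}\,\zeta\ne0$ ensures the Gamma factors have no poles along $\zeta+2N$, and it makes the Stirling estimates uniform.
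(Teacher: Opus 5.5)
\textbf{There is a genuine gap.} It is the step you yourself flag as the main obstacle, the remainder estimate.

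\emph{The premise of Step~3 is false.} You assume $G_{\zeta+2N}$ is a heavy propagator, localised at $s\sim 1/N$ and exponentially decaying. But along the lattice the label grows in its \emph{real} part, so $\mu_{\zeta+2N}=h^2-(\zeta+2N)^2\to-\infty$. This is a large \emph{negative} squared mass; the paper stresses that it is not a heavy positive-mass limit. Spectrally, with $w=\zeta+2N$, the denominators in $G_w(1-2t)=V_d^{-1}\sum_\ell D_\ell P_\ell(t)/[(\ell+h)^2-w^2]$ nearly vanish for $\ell\approx 2N$. Only ${\rm Im}\,\zeta$ keeps them away from zero. In position space $G_w$ is a Helmholtz-type standing wave of wavenumber $\sim 2N$, whose exponential rate is set by ${\rm Im}\,\zeta$ alone and does not grow with $N$. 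So no bound $|G_w|\lesssim N^{d-2}e^{-cNs}$ holds, and splitting at $s=N^{-1}\log N$ gives nothing.

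\emph{The Mellin route is only named.} To work it would need two things you have not supplied. First, a smooth cutoff away from the antipode: with the sharp cutoff at $t=1$, the transform of $R$ generically decays only like $1/|{\rm Im}\,s|$. Second, bounds on the Gamma ratio in Eq.~\eqref{wm:eq:powerres} that are uniform in $N$ for $|{\rm Im}\,\rho|\gtrsim N$.

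\emph{The source of the $\log N$ is misidentified.} For non-integer $d\in(2,3)$ the Frobenius exponents $k$, $1-d/2+k$, $2-d+k$ never coincide, so no $t^k\log t$ terms arise. In the paper the logarithm comes from the resonant window $\ell\approx 2N$. A minor point: your expansion of $F$ drops the constant $T_{q_1}T_{q_2}$, which dominates $t^{3-d}$ near $t=0$. It is harmless, contributing only $O(N^{-2})$.

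\textbf{How the paper proceeds.} It never uses position-space asymptotics of $G_w$.
\begin{enumerate}
\item It expands $f=G_{q_1}G_{q_2}-c_d^2t^{2-d}$ in zonal Jacobi polynomials; note $G_{1/2}=c_dt^{1-d/2}$ exactly.
\item It proves $f_\ell=O((1+\ell)^{d-3})$. This uses cut-off Frobenius monomials, whose coefficients are explicit, and Jacobi-operator integrations by parts on the remainder.
\item It bounds $\sum_\ell f_\ell/[(\ell+h)^2-w^2]$ separately over $\ell<N$, $N\le\ell\le 4N$ and $\ell>4N$. In the middle window it uses $|\ell+h-w|\ge c(1+|\ell-2N|)$.
\end{enumerate}

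\textbf{How your argument can be completed.} Your Step~2 is correct. Once $\Gamma(h-\zeta')$ and $\Gamma(h+1+\rho-\zeta')$ are reflected, a bounded periodic factor remains, and this is where ${\rm Im}\,\zeta\ne 0$ enters. To finish, use the paper's tool on the remainder. After subtracting enough global monomials, the remainder admits many Jacobi-operator integrations by parts, so its coefficients $r_\ell$ decay faster than any fixed power. Moreover $|(\ell+h)^2-w^2|\gtrsim N^2$ for $\ell<N$, and $\gtrsim N|{\rm Im}\,\zeta|$ for all $\ell$. Hence the remainder's pairing with $G_w$ is $O(N^{-2})=o(N^{d-4})$. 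This hybrid would even remove the logarithm, but as written the remainder estimate, which is the substance of the lemma, is missing.
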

\begin{proof}
With $c_d=\Gamma(d/2-1)/(4\pi)^{d/2}$, the connection formula gives
\begin{align}
 G_q(1-2t)={}&c_dt^{1-d/2}
 {}_2F_1(1/2-q,1/2+q;2-d/2;t)\nonumber\\
 &+T_q\,{}_2F_1(h+q,h-q;d/2;t).
\end{align}
Consequently,
\begin{align}
 f(t)&=G_{q_1}(1-2t)G_{q_2}(1-2t)-c_d^2t^{2-d}\nonumber\\
 &=c_d(T_{q_1}+T_{q_2})t^{1-d/2}
 +t^{3-d}F_1(t)+t^{2-d/2}F_2(t)+A(t),
\end{align}
where $F_1,F_2,A$ are analytic locally at zero, and $f$ is smooth at one.
The radial spectral decomposition extends to real $d>0$ using Jacobi
polynomials and Gamma-function multiplicities~\cite{Szego1975,DaiXu2013}.
The normalised zonal polynomials are
\begin{equation}
 P_\ell(t)=\frac{\ell!}{(d/2)_\ell}
 P_\ell^{(d/2-1,d/2-1)}(1-2t)
 ={}_2F_1(-\ell,\ell+d-1;d/2;t),\qquad P_\ell(0)=1.
\end{equation}
Their eigenvalues, multiplicities and sphere volume are
\begin{equation}
 \lambda_\ell=\ell(\ell+d-1),\qquad
 D_\ell=\frac{(2\ell+d-1)\Gamma(\ell+d-1)}{\Gamma(d)\Gamma(\ell+1)},
 \qquad V_d=\frac{2\pi^{(d+1)/2}}{\Gamma((d+1)/2)}.
\end{equation}
They satisfy
\begin{equation}
 \langle P_\ell,P_m\rangle=\frac{V_d}{D_\ell}\delta_{\ell m},\qquad
 G_q(1-2t)=\frac1{V_d}\sum_{\ell\ge0}
 \frac{D_\ell P_\ell}{(\ell+h)^2-q^2}.
\end{equation}
For $g=\sum_\ell g_\ell P_\ell$ and $g(t)=t^\rho$, a beta integral and the
terminating Pfaff--Saalschutz identity~\cite{DLMF} give
\begin{equation}\label{wm:eq:powercoeff}
 g_\ell=D_\ell\frac{\Gamma(d/2+\rho)\Gamma(d)}
 {\Gamma(d/2)\Gamma(d+\rho)}
 \frac{(-\rho)_\ell}{(d+\rho)_\ell}
 =O(\ell^{-2\rho-1}).
\end{equation}
Choose a smooth cutoff $\chi$, equal to one near zero and zero near one,
and subtract sufficiently many noninteger Frobenius monomials:
\begin{equation}
 f(t)=\left[c_d(T_{q_1}+T_{q_2})t^{1-d/2}
 +\sum_{j=0}^{L}F_jt^{\rho_j}\right]\chi(t)+R_M(t).
\end{equation}
The sum after the displayed leading term contains the additional monomials. The terms are chosen according to the Frobenius expansion in such the way that near zero $R_M$ is analytic up to terms
$O(t^{\rho_*})$ with $\rho_*>M$. Near one it equals $f$.
Consider the Jacobi operator
\begin{equation}
 J=-\frac1{t^\alpha(1-t)^\alpha}\frac{\dd}{\dd t}
 \left[t^{\alpha+1}(1-t)^{\alpha+1}\frac{\dd}{\dd t}\right],
 \qquad\alpha=d/2-1.
\end{equation}
Its orthogonal
eigenbasis is $P_\ell$, with $JP_\ell=\lambda_\ell P_\ell$.
The operator preserves analytic Taylor expansions and lowers noninteger
powers by at most one. Thus $J^MR_M$ is square integrable, and the boundary
terms in $M$ integrations by parts vanish. For
$R_M=\sum_{m\ge0}r_m P_m$ and $\ell\ge1$, we can write
\begin{equation}
 \langle R_M,P_\ell\rangle=\lambda_\ell^{-M}
 \langle J^MR_M,P_\ell\rangle,
\end{equation}
so the Cauchy--Schwarz inequality gives
\begin{equation}
 |r_\ell|\le C\lambda_\ell^{-M}\|P_\ell\|^{-1}
 =O(\ell^{-2M+(d-1)/2}).
\end{equation}
Here $\|P_\ell\|^2=V_d/D_\ell=O(\ell^{-(d-1)})$.
The next lemma gives $|P_\ell(t)|\le1$ for $d>1$.
The cutoff analytic part is globally smooth and, therefore, has coefficients decaying
faster than any inverse power. A cutoff monomial $\chi t^\rho$ differs
from $t^\rho$ by a smooth function, so Eq.~\eqref{wm:eq:powercoeff}
controls its leading coefficients. The dominant exponent is $1-d/2$;
hence $f_\ell=O((1+\ell)^{d-3})$.

The resolvent pairing is therefore absolutely convergent for $d<3$:
\begin{equation}
 \int_{S^d}f(t)G_w(1-2t)\,\dd Y
 =\sum_{\ell=0}^\infty\frac{f_\ell}{(\ell+h)^2-w^2},
 \qquad w=\zeta+2N.
\end{equation}
Split the sum into $\ell<N$, $N\le\ell\le4N$, and $\ell>4N$.
In the first region both factors of
$(\ell+h-w)(\ell+h+w)$ have magnitude comparable to $N$, so, the sum $S_<$ in the first region satisfies
\begin{equation}
 |S_<|\le\frac C{N^2}\sum_{\ell<N}(1+\ell)^{d-3}
 =O(N^{d-4}).
\end{equation}
In the last region the denominator has magnitude comparable to $\ell^2$, and for the corresponding sum, we get
\begin{equation}|S_>|\le C\sum_{\ell>4N}\ell^{d-5}=O(N^{d-4}).\end{equation}
In the middle region $|\ell+h+w|\sim N$, and $\rm Im\zeta\ne0$ implies
\begin{equation}
 |\ell+h-w|\ge c(1+|\ell-2N|).
\end{equation}
Therefore, the middle sum satisfies
\begin{equation}
 |S_0|\le CN^{d-4}\sum_{|k|\le2N}\frac1{1+|k|}
 =O(N^{d-4}\log N).
\end{equation}
Combining the three estimates proves Eq.~\eqref{wm:eq:estimate}.
\end{proof}
It remains to prove the following Lemma.
\begin{lemma}
For $\lambda=(d-1)/2>0$, it holds $|P_\ell(t)|\le1$ for $0\le t\le1$.
\end{lemma}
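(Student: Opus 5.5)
The plan is to recognise $P_\ell$ as a normalised Gegenbauer polynomial and then invoke the classical sup-norm bound. With $x=1-2t\in[-1,1]$ and $\lambda=(d-1)/2$, the Jacobi parameters are $\alpha=\beta=d/2-1=\lambda-\tfrac12$. The standard identity between symmetric Jacobi polynomials and Gegenbauer polynomials then gives
\begin{equation*}
 P_\ell(t)=\frac{C_\ell^{(\lambda)}(x)}{C_\ell^{(\lambda)}(1)},\qquad C_\ell^{(\lambda)}(1)=\frac{(2\lambda)_\ell}{\ell!}>0 .
\end{equation*}
This is consistent with the normalisation $P_\ell(0)=1$ used in the previous lemma. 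The claim is therefore $|C_\ell^{(\lambda)}(x)|\le C_\ell^{(\lambda)}(1)$ on $[-1,1]$ for $\lambda>0$.

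To prove this I will use the generating function
\begin{equation*}
(1-2xr+r^2)^{-\lambda}=\sum_\ell C^{(\lambda)}_\ell(x)r^\ell .
\end{equation*}
Write $x=\cos\theta$, so that $1-2xr+r^2=(1-re^{i\theta})(1-re^{-i\theta})$. Expanding each factor with the binomial series gives
\begin{equation*}
C_\ell^{(\lambda)}(\cos\theta)=\sum_{k=0}^{\ell}\frac{(\lambda)_k(\lambda)_{\ell-k}}{k!\,(\ell-k)!}\,e^{i(\ell-2k)\theta}.
\end{equation*}
For $\lambda>0$ every coefficient $(\lambda)_k(\lambda)_{\ell-k}/(k!(\ell-k)!)$ is strictly positive. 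The triangle inequality then bounds $|C_\ell^{(\lambda)}(\cos\theta)|$ by the same sum evaluated at $\theta=0$, which is $C_\ell^{(\lambda)}(1)$. Dividing by this positive number yields $|P_\ell(t)|\le1$ for all $0\le t\le1$.

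The only step needing care is the identification of $P_\ell$ with the Gegenbauer ratio, including the sign of the normalisation. I would check it by comparing leading coefficients, or simply by noting two facts. First, both sides solve the same hypergeometric equation: ${}_2F_1(-\ell,\ell+2\lambda;\lambda+\tfrac12;t)$ is the standard hypergeometric form of $C_\ell^{(\lambda)}(1-2t)/C_\ell^{(\lambda)}(1)$, and here $d/2=\lambda+\tfrac12$ and $d-1=2\lambda$. Second, both sides equal $1$ at $t=0$.

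No restriction to integer $d$ is needed, since the positivity of the Pochhammer coefficients uses only $\lambda>0$. This holds in the range $2<d<3$ where the bound is applied in Lemma~\ref{wm:lem:estimate}.
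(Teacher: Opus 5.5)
Your proof is correct, but it reaches the bound by a different route from the paper.

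The paper sets $t=(1-\cos\theta)/2$ and quotes the Laplace-type integral representation of normalised Gegenbauer polynomials (DLMF \S18.10),
\[
 P_\ell(t)=\frac{\Gamma(d/2)}{\sqrt\pi\,\Gamma(h)}\int_0^\pi(\cos\theta+i\sin\theta\cos\varphi)^\ell(\sin\varphi)^{2\lambda-1}\,\dd\varphi .
\]
This displays $P_\ell$ as an average, against a positive weight of unit mass, of a function of modulus $(1-\sin^2\theta\sin^2\varphi)^{1/2}\le1$. The triangle inequality then finishes the proof in one line.

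You instead factor $1-2xr+r^2=(1-re^{i\theta})(1-re^{-i\theta})$ in the generating function. This writes $C_\ell^{(\lambda)}(\cos\theta)$ as a trigonometric polynomial with positive coefficients, which you compare with its value at $\theta=0$. In both proofs $\lambda>0$ enters only through positivity: in the paper, of the weight $(\sin\varphi)^{2\lambda-1}$ and of $\Gamma(\lambda)$; in yours, of the Pochhammer symbols.

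The paper's version is shorter given the cited formula, and it works with the normalised function directly, never needing $C_\ell^{(\lambda)}(1)$. Yours is self-contained, needing only the generating function and the binomial series. It also makes explicit the identification $P_\ell(t)=C_\ell^{(\lambda)}(1-2t)/C_\ell^{(\lambda)}(1)$, which the paper uses tacitly when it applies the integral representation to $P_\ell$.

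If you rely on the ``same equation, same value at $t=0$'' check for that identification, add that the solution analytic at $t=0$ is unique up to normalisation because $c=d/2$ is not a non-positive integer.
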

\begin{proof}
Set $$t=\frac{1-\cos\theta}{2}.$$
The Poisson representation of the Gegenbauer polynomials gives \cite{DLMF1}
\begin{equation}
 P_{\ell}(t)
 =\frac {\Gamma(\tfrac d2)}{\sqrt \pi \Gamma(h)}\int_0^\pi
 (\cos\theta+i\sin\theta\cos\varphi)^\ell
 (\sin\varphi)^{2\lambda-1}\,\dd\varphi.
\end{equation}
Since
$|\cos\theta+i\sin\theta\cos\varphi|^2
=1-\sin^2\theta\sin^2\varphi\le1$, the triangle inequality proves the bound.
\end{proof}

Gamma-function asymptotics give
$\mathcal H_3(q_1,q_2,\zeta+2N)\sim N^{4a}$. Thus, for $2<d<3$,
\begin{equation}
 \frac{\I(q_1,q_2,\zeta+2N)-\I(1/2,1/2,\zeta+2N)}
 {\mathcal H_3(q_1,q_2,\zeta+2N)}=O(N^{2-d}\log N)\longrightarrow0.
\end{equation}
The reflection formula and asymptotics~\cite{DLMF} yield
\begin{equation}
 \I(1/2,1/2,\zeta+2N)\sim b_hs^+(\zeta)(2N)^{4a},\qquad
 s^+(\zeta)=-\frac{\sin\pi(h+\zeta)}{\sin\pi(h-\zeta)}.
\end{equation}
The physical normalised boundary value is therefore
\begin{equation}\label{wm:eq:boundaryconstant}
 \lim_{N\to\infty}\frac{\I(q_1,q_2,\zeta+2N)}
 {\mathcal H_3(q_1,q_2,\zeta+2N)}=2^{4a}b_hs^+(\zeta).
\end{equation}
It fixes the periodic homogeneous addition along each lattice and gives
\begin{equation}\label{wm:eq:B}
 \Bcal_k(\bq)=2^{4a}b_hs_k^+\mathcal H_k(\bq),\qquad
 s_k^+=-\frac{\sin\pi(h+q_k)}{\sin\pi(h-q_k)}.
\end{equation}
The factor $s_k^+$ encodes the compact-sphere standing-wave behaviour.
Along the lattice $\mu_{q_k}\to-\infty$, so this is not a heavy positive-mass
limit on the sphere. The boundary term is generically nonzero.

\begin{theorem}\label{th:recurrence}
The recurrence and physical normalised boundary value determine
\begin{equation}\label{eq:sol}
 \I=\Bcal_k-\Sigma_d,\qquad
 \Sigma_d=\sum_{n\ge0}\frac{\mathcal H_k(\bq)}{\mathcal H_k(\bq+2n\ee_k)}
 \frac{\mathcal S_k(\bq+(2n+1)\ee_k)}{P_k^+(\bq+(2n+1)\ee_k)},
\end{equation}
where
\begin{equation}\label{eq:B}
 \Bcal_k=2^{4a}b_hs_k^+
 \prod_{\boldsymbol\epsilon}\frac{\Gamma(B_{\boldsymbol\epsilon})}
 {\Gamma(A_{\boldsymbol\epsilon})}.
\end{equation}
\end{theorem}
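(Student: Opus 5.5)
The proof is to pass to the limit $N\to\infty$ in the telescoped identity \eqref{eq:solN}. We work first in the strip $2<d<3$ with generic labels, nonpolar $q_1,q_2$ and ${\rm Im}(q_k)\ne0$, which is where Lemma~\ref{wm:lem:estimate} and the boundary value \eqref{wm:eq:boundaryconstant} were established. We then continue meromorphically in $d$ and in the labels. By permutation symmetry it suffices to treat $k=3$.

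\emph{Step 1.} Write \eqref{eq:solN} with $k=3$. The boundary term is $\mathcal H_3(\bq)\,\I(\bq_{N+1})/\mathcal H_3(\bq_{N+1})$. Here $\bq_{N+1}=(q_1,q_2,\zeta+2(N+1))$ runs along the same even lattice, so by \eqref{wm:eq:boundaryconstant} this term tends to $2^{4a}b_hs^+(\zeta+2N+2)\mathcal H_3(\bq)$. Since $s^+$ has period one in $\zeta$, the limit equals $\Bcal_3(\bq)$ as defined in \eqref{wm:eq:B}.

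\emph{Step 2.} For the same reason, the partial sums of $\Sigma_d$ converge to a limit, namely the difference of the two other terms in \eqref{eq:solN}. The sum therefore converges in this domain and the identity $\I=\Bcal_3-\Sigma_d$ holds there. As a consistency check, and to control the continuation, I would also verify absolute convergence directly from Gamma asymptotics:
\begin{itemize}
\item $\mathcal H_3(\bq)/\mathcal H_3(\bq+2n\ee_3)\sim n^{-4a}$;
\item $P_3^+\sim n^4$;
\item $\mathcal S_3\sim n\cdot n^{2(d-2)}$, using $T_{q+1}/T_q\sim q^{2h-2}$ from \eqref{eq:Tshift} and the fact that $(\kappa_1-\kappa_2)/\kappa_3\to0$.
\end{itemize}
The summand should then behave like $n^{d-4}\cdot n^{-4a}\cdot n^{2a}$ roughly, i.e.\ a power that decays for $d$ below about $3$. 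I would make this precise, uniformly on compact sets away from poles.

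\emph{Step 3.} The phrase ``determine'' is the uniqueness statement. Any two solutions of the recurrence differ by $\mathcal H_3\,p$ with $p$ of period two in $q_3$. The physical boundary value forces $p(\zeta+2N)\to0$, and periodicity then gives $p\equiv0$ on each lattice. So the boundary condition selects $\I$ uniquely.

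\emph{Step 4.} Extend to general $d$ and to real $\zeta$ by meromorphic continuation. $\Bcal_k$ is explicitly meromorphic. For $\Sigma_d$ I would use the uniform asymptotics of Step~2 to get locally uniform convergence, possibly after analytically continuing the tail via the large-$n$ expansion where convergence fails for larger $d$. The identity then propagates from the open set where it was proved.

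The main obstacle is the last step. I would need to show that $\Sigma_d$ defines a meromorphic function of $d$ near $4$ with the same identity. If plain convergence fails there, the tail must be continued, for example by rewriting the sum as a hypergeometric-type series in $n$ whose continuation is standard.
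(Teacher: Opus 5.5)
Your Steps 1--3 follow the paper's argument. You take $N\to\infty$ in \eqref{eq:solN} using \eqref{wm:eq:boundaryconstant}, kill the period-two homogeneous ambiguity $\mathcal H_3p$ with the normalised limit, and continue from $2<d<3$, ${\rm Im}\,\zeta\neq0$. Your remark that convergence of the partial sums in $2<d<3$ follows from the exact identity \eqref{eq:solN} itself is correct. One small point: in Step 1 the limit should be written $2^{4a}b_hs^+(\zeta)\mathcal H_3(\bq)$, since an $N$-dependent expression cannot be a limit, although the period one of $s^+$ makes this harmless.

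The genuine gap is the asymptotic bookkeeping in Step 2. It is wrong, and it is what creates the ``main obstacle'' of Step 4.
\begin{itemize}
\item By \eqref{eq:Tshift}, $T_{q+1}/T_q=(h+q)/(q-a)\to1$, not $q^{2h-2}$. What holds is $T_{q_{3,n}}\sim c\,n^{2h-1}=c\,n^{d-2}$, read off from \eqref{Tq3n} since $\alpha-\beta=(h+a)/2$.
\item Each term of $\mathcal S_3(q_{3,n})$ contains at most one shifted tadpole. Hence $\mathcal S_3=O(n^{d-1})$, not $n^{1+2(d-2)}$.
\item Combining this with $\mathcal H_3(\bq)/\mathcal H_3(\bq+2n\ee_3)\sim n^{-4a}=n^{6-2d}$ and $P_3^+\sim16n^4$, the three source families have orders $n^{3-2d}$, $n^{1-d}$ and $n^{-1-d}$. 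These are exactly the orders the paper records.
\end{itemize}
Your combined guess $n^{d-4}n^{-4a}n^{2a}$ equals $n^{-1}$, which would not converge for any $d$. Your conclusion that the summand ``decays for $d$ below about $3$'' is backwards: convergence improves as $d$ grows.

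With the correct orders, Step 4 closes directly. Pull out the common factor $\Gamma(1-d/2)^2$ carried by every tadpole product. The remaining series then converges absolutely and locally uniformly on compact subsets of ${\rm Re}\,d>2$, away from parameter poles. By Weierstrass, $\Sigma_d$ is meromorphic there, including a neighbourhood of $d=4$. The identity theorem then carries $\I=\Bcal_k-\Sigma_d$ from the real interval $(2,3)$, and from ${\rm Im}\,\zeta\neq0$, to the full domain. No continuation of the tail is needed.

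As written, however, your Step 4 is only a conditional plan resting on an incorrect estimate. So your argument does not establish the continuation to $d$ near $4$, which is what the subsequent four-dimensional expansion depends on.
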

\begin{remark}
The identity is first obtained for $2<d<3$ and then continued
meromorphically. The three source families have orders
$n^{3-2d}$, $n^{1-d}$ and $n^{-1-d}$, so the sum converges in that initial
interval, away from parameter poles.
\end{remark}
\begin{remark}[The boundary term is not optional]\label{rem:bdry}
The recurrence alone permits additions $\mathcal H_kp$ with $p$ of period
two. Lemma~\ref{wm:lem:estimate} and Eq.~\eqref{wm:eq:boundaryconstant}
fix this ambiguity by comparison with conformally coupled spectators.
Setting $\Bcal_k$ to zero would select the wrong integral.
\end{remark}

For the remainder of this section fix $k=3$.
\begin{theorem}\label{th:solution}
The solution can be written as
\begin{equation}\label{eq:flatgen}\I(q_1,q_2,q_3)=\Bcal_k-\Sigma_d,\end{equation}
with
\begin{equation}\label{eq:flatB}
 \Bcal_k=2^{4a}b_h\left[-\frac{\sin\pi(h+q_3)}{\sin\pi(h-q_3)}\right]
 \prod_{\boldsymbol\epsilon}
 \frac{\Gamma(B_{\boldsymbol\epsilon})}{\Gamma(B_{\boldsymbol\epsilon}-a)},
\end{equation}
\begin{align}\label{eq:flatSigma}
 \Sigma_d=\frac{(q_3+1)(2h-1)}{8\prod_{\boldsymbol\epsilon}B_{\boldsymbol\epsilon}}
 \Big[&2T_{q_1}T_{q_2}\mathcal F_1
 -(T_{q_1}+T_{q_2})T_{q_3+1}\mathcal F_2\nonumber\\
 &+\frac{\kappa_{q_1}-\kappa_{q_2}}{\kappa_{q_3+1}}
 (T_{q_1}-T_{q_2})T_{q_3+1}\mathcal F_3\Big],
\end{align}
where the tadpoles are defined in Eq.~\eqref{eq:tadpole},
and
\begin{align}\label{eq:flatF}
 \mathcal F_1&={}_6F_5\!\left[\begin{matrix}
 B_{\boldsymbol\epsilon}-a,\ (q_3+3)/2,\ 1\\
 B_{\boldsymbol\epsilon}+1,\ (q_3+1)/2
 \end{matrix};1\right],\nonumber\\
 \mathcal F_2&={}_8F_7\!\left[\begin{matrix}
 B_{\boldsymbol\epsilon}-a,\ (q_3+3)/2,\ (q_3+2+a)/2,\ (q_3+3+a)/2,\ 1\\
 B_{\boldsymbol\epsilon}+1,\ (q_3+1)/2,\ (q_3+1-a)/2,\ (q_3+2-a)/2
 \end{matrix};1\right],\\
 \mathcal F_3&={}_8F_7\!\left[\begin{matrix}
 B_{\boldsymbol\epsilon}-a,\ (q_3+3)/2,\ (q_3+1+a)/2,\ (q_3+2+a)/2,\ 1\\
 B_{\boldsymbol\epsilon}+1,\ (q_3+1)/2,\ (q_3+2-a)/2,\ (q_3+3-a)/2
 \end{matrix};1\right].\nonumber
\end{align}
Each occurrence of $B_{\boldsymbol\epsilon}$ stands for its four values,
$B_{\boldsymbol\epsilon}=(h+q_3+\epsilon_1q_1+\epsilon_2q_2)/2$.
\end{theorem}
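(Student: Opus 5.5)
Starting from Theorem~\ref{th:recurrence} with $k=3$, the boundary piece \eqref{eq:flatB} is \eqref{eq:B} with $A_{\boldsymbol\epsilon}=B_{\boldsymbol\epsilon}-a$ written out, so nothing needs proving there. The work is to recast the series $\Sigma_d$ of \eqref{eq:sol} as the three hypergeometric functions in \eqref{eq:flatF}. I would split the source $\mathcal S_3(\bq+(2n+1)\ee_3)$ from \eqref{eq:RR} into its three brackets: $2T_1T_2$, $-(T_1+T_2)T_3$ and the $\kappa$-difference term. Each bracket gives its own series in $n$, and I would show that each is a balanced ${}_pF_{p-1}$ at unit argument. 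This reduces to computing the ratio of consecutive terms and checking that it is rational in $n$, with the numerator and denominator factors listed in \eqref{eq:flatF}.

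The ingredients for the term ratios are as follows.
\begin{itemize}
\item \emph{Homogeneous factor.} By \eqref{eq:H}, $\mathcal H_3(\bq)/\mathcal H_3(\bq+2n\ee_3)=\prod_{\boldsymbol\epsilon}(A_{\boldsymbol\epsilon})_n/(B_{\boldsymbol\epsilon})_n$, since $B_{\boldsymbol\epsilon}$ and $A_{\boldsymbol\epsilon}$ each shift by $n$ when $q_3\to q_3+2n$.
\item \emph{Denominator.} $P_3^+(\bq+(2n+1)\ee_3)=\prod_{\boldsymbol\epsilon}2(B_{\boldsymbol\epsilon}+n)$, because $a+q_3+2n+1+\epsilon_1q_1+\epsilon_2q_2=2(B_{\boldsymbol\epsilon}+n)$ once $h=a+1$ is used. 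So $1/P_3^+=16^{-1}\prod_{\boldsymbol\epsilon}B_{\boldsymbol\epsilon}^{-1}\,(B_{\boldsymbol\epsilon})_n/(B_{\boldsymbol\epsilon}+1)_n$. Combined with the homogeneous factor this gives $\prod_{\boldsymbol\epsilon}(B_{\boldsymbol\epsilon}-a)_n/(B_{\boldsymbol\epsilon}+1)_n$ and the prefactor $1/\prod B_{\boldsymbol\epsilon}$.
\item \emph{Linear prefactor.} $2q_3'(2h-1)$ with $q_3'=q_3+2n+1$ gives $(q_3+1)\,((q_3+3)/2)_n/((q_3+1)/2)_n$ times $2(2h-1)$. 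This is the very-well-poised-type pair $(q_3+3)/2$ over $(q_3+1)/2$. The pair of $1$'s supplies $n!$, so the series are written in the standard form.
\item \emph{Tadpoles.} $T_1T_2$ does not depend on $n$, which gives $\mathcal F_1$ as a ${}_6F_5$. For the second bracket I iterate \eqref{eq:Tshift} twice: $T_{q+2}/T_q=\frac{(h+q)(h+q+1)}{(q-a)(q+1-a)}$. Hence $T_{q_3+2n+1}=T_{q_3+1}\,((q_3+2+a)/2)_n((q_3+3+a)/2)_n/(((q_3+1-a)/2)_n((q_3+2-a)/2)_n)$, using $h+q_3+1=q_3+2+a$. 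This is exactly $\mathcal F_2$.
\item \emph{Third bracket.} Here there is the extra factor $1/\kappa_{q_3+2n+1}$, with $\kappa_q=(a-q)(a+q)$. The ratio $\kappa_{q_3+1}/\kappa_{q_3+2n+1}$ contributes $((q_3+1+a)/2)_n((q_3+1-a)/2)_n/(((q_3+3+a)/2)_n((q_3+3-a)/2)_n)$. Multiplying this into the $T$ ratio cancels factors and leaves the parameters of $\mathcal F_3$.
\end{itemize}

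The overall constant should come out as $2(2h-1)(q_3+1)/(16\prod B)=(q_3+1)(2h-1)/(8\prod B)$, matching \eqref{eq:flatSigma}. The sign convention of $\Sigma_d$ entering with a minus sign is inherited unchanged.

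The main obstacle is bookkeeping rather than analysis. The risk is that the sign and orientation conventions of $\kappa$ and of $a\pm q$ in $P_3^+$ make the identification with $B_{\boldsymbol\epsilon}+n$ exact, including the factor $2^4$ and the $a$ versus $h$ offsets. A second task is to justify convergence at unit argument. For $\mathcal F_1$, for example, the parametric excess of denominator over numerator parameters is $4a+4-1=2d-3>0$. I would confirm that each series is absolutely convergent for $2<d<3$, consistent with the remark on source orders, and then continue meromorphically. The result is then identified with \eqref{eq:sol} term by term.
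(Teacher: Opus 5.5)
Your proposal is correct and follows the paper's proof essentially step for step: you split the source into its three brackets and convert each factor into a Pochhammer quotient exactly as the paper does. These factors are the homogeneous ratio, $P_3^+(\bq+(2n+1)\ee_3)=16\prod_{\boldsymbol\epsilon}(B_{\boldsymbol\epsilon}+n)$, the linear prefactor, the doubled tadpole shift and the ratio $\kappa_{q_3+1}/\kappa_{q_3+2n+1}$; the paper writes the last one as $-\frac{1}{4\gamma\beta}\frac{(\gamma)_n(\beta)_n}{(\gamma+1)_n(\beta+1)_n}$ with $-4\gamma\beta=\kappa_{q_3+1}$. There is one small slip in your convergence aside: counting the upper parameter $1$, the parametric excess of $\mathcal F_1$ is $4a+2=2d-4$, so the summand decays like $n^{3-2d}$ and absolute convergence needs $d>2$ rather than $2d-3>0$; this is harmless since you work in $2<d<3$.
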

\begin{proof}
Set $q_{3,n}=q_3+1+2n$ and use $T_{q_{3,n}}$ for the shifted tadpole.
The source is
\begin{equation}\label{Rcal3}
 \mathcal S_3(q_{3,n})=2q_{3,n}(d-2)
 \left[2T_1T_2-(T_1+T_2)T_{q_{3,n}}
 +\frac{\kappa_1-\kappa_2}{\kappa_{q_{3,n}}}
 (T_1-T_2)T_{q_{3,n}}\right].
\end{equation}
Define
\begin{equation}
 {\Theta_n}=\frac{\mathcal H_3(\bq)}{\mathcal H_3(\bq+2n\ee_3)}
 \frac{2q_{3,n}(d-2)}{P_3^+(q_{3,n})}.
\end{equation}
The three terms in $\Sigma_d$ are
\begin{align}
 \Sigma_d^{(1)}&=2T_1T_2\sum_{n\ge0}{\Theta_n},\\
 \Sigma_d^{(2)}&=-(T_1+T_2)\sum_{n\ge0}{\Theta_n}T_{q_{3,n}},\\
 \Sigma_d^{(3)}&=(\kappa_1-\kappa_2)(T_1-T_2)
 \sum_{n\ge0}{\Theta_n}\frac{T_{q_{3,n}}}{\kappa_{q_{3,n}}}.
\end{align}
The Gamma-function shift formula gives
\begin{equation}
 \frac{\mathcal H_3(\bq)}{\mathcal H_3(\bq+2n\ee_3)}
 =\prod_{\boldsymbol\epsilon}\frac{(A_{\boldsymbol\epsilon})_n}
 {(B_{\boldsymbol\epsilon})_n},\qquad
 P_3^+(q_{3,n})=16\prod_{\boldsymbol\epsilon}(n+B_{\boldsymbol\epsilon}).
\end{equation}
Using $(B+1)_n=(n+B)(B)_n/B$ and $e=(q_3+1)/2$,
\begin{equation}
 {\Theta_n}=\frac{4e(d-2)}{16\prod_{\boldsymbol\epsilon}B_{\boldsymbol\epsilon}}
 \frac{(e+1)_n}{(e)_n}
 \prod_{\boldsymbol\epsilon}
 \frac{(A_{\boldsymbol\epsilon})_n}{(B_{\boldsymbol\epsilon}+1)_n}.
\end{equation}
This immediately gives the ${}_6F_5$ in $\Sigma_d^{(1)}$.
For the second term set
\begin{equation}
 \alpha=\frac{h+q_3+1}{2},\qquad\beta=\frac{q_3+1-a}{2}.
\end{equation}
The tadpole recurrence implies
\begin{equation}\label{Tq3n}
 T_{q_{3,n}}=T_{q_3+1}
 \frac{(\alpha)_n(\alpha+1/2)_n}{(\beta)_n(\beta+1/2)_n}.
\end{equation}
Indeed,
\begin{equation}
 \frac{T_{q+2}}{T_q}=\frac{h+q+1}{q+1-a}\frac{h+q}{q-a},
\end{equation}
and multiplication along the lattice proves Eq.~\eqref{Tq3n}.
It gives
\begin{equation}
 \Sigma_d^{(2)}=-(T_1+T_2)T_{q_3+1}
 \frac{4e(d-2)}{16\prod B_{\boldsymbol\epsilon}}
 {}_8F_7\!\left[\begin{matrix}
 A_{\boldsymbol\epsilon},e+1,\alpha,\alpha+1/2,1\\
 B_{\boldsymbol\epsilon}+1,e,\beta,\beta+1/2
 \end{matrix};1\right].
\end{equation}
Finally, with $\gamma=(a+q_3+1)/2$,
\begin{equation}
 \kappa_{q_{3,n}}=-4(\beta+n)(\gamma+n),\qquad
 \kappa_{q_{3,n}}^{-1}=-\frac1{4\gamma\beta}
 \frac{(\gamma)_n(\beta)_n}{(\gamma+1)_n(\beta+1)_n}.
\end{equation}
Multiplication and cancellation of common Pochhammer factors give
\begin{align}
 \Sigma_d^{(3)}=-
 \frac{(\kappa_1-\kappa_2)(T_1-T_2)T_{q_3+1}}{4\gamma\beta}
 \frac{4e(d-2)}{16\prod_{\boldsymbol\epsilon} B_{\boldsymbol\epsilon}}
 {}_8F_7\!\left[\begin{matrix}
 A_{\boldsymbol\epsilon},e+1,\alpha,\gamma,1\\
 B_{\boldsymbol\epsilon}+1,e,\beta+1/2,\beta+1
 \end{matrix};1\right].
\end{align}
These are precisely the three functions in Eq.~\eqref{eq:flatF}.
\end{proof}
\begin{remark}\label{rm-4}
For the expansion near four dimensions it is useful to write
\begin{equation}\label{eq:Sigmahyp}
 \Sigma_d=\frac{4e(d-2)}{16\prod B_{\boldsymbol\epsilon}}
 \left[2T_1T_2\mathcal F_1-(T_1+T_2)T_{q_3+1}\mathcal F_2
 -\frac{(\kappa_1-\kappa_2)(T_1-T_2)T_{q_3+1}}{4\gamma\beta}\mathcal F_3\right].
\end{equation}
The functions $\mathcal F_r$ are regular near $d=4$, away from parameter
poles. The divergences of $\Sigma_d$ are therefore encoded in the tadpole
factors; the boundary term is expanded separately. The summands have orders
$n^{3-2d}$, $n^{1-d}$ and $n^{-1-d}$, respectively. Parameter derivatives
introduce logarithms without spoiling convergence near $d=4$, allowing
termwise expansion.
\end{remark}

\section{The four-dimensional effective potential}\label{sec:5}
To obtain the finite two-loop potential, we need the tadpole Laurent expansion through its coefficient of $\eps$ and the watermelon expansion
through its finite coefficient. Our strategy is the following. We will first expand at fixed labels $\bq$, then insert their regulator dependence at fixed field-dependent masses and scalar
curvature, and finally restore the curvature and renormalisation scales.

\subsection{The four-dimensional expansion}\label{sec:tad}
\subsubsection{The tadpole}
We expand the tadpole of Eq.~\eqref{eq:tadpole} at
$d=4-2\eps$, initially keeping $q$ fixed. Use
\begin{equation}
 \log\Gamma(z+\eps)=\log\Gamma(z)+\psi(z)\eps
 +\frac12\psi'(z)\eps^2+O(\eps^3)
\end{equation}
and
\begin{equation}
 \psi'(\tfrac32+q)+\psi'(\tfrac32-q)
 =-8\frac{1+4q^2}{(1-4q^2)^2}+\frac{\pi^2}{\cos^2\pi q}
\end{equation}
to obtain
\begin{equation}\label{eq:Tlog}
 T_{4-2\eps}(q)=-\frac{1/4-q^2}{16\pi^2\eps}
 \exp\left[-F_1(q)\eps+\frac12F_2(q)\eps^2+O(\eps^3)\right],
\end{equation}
where
\begin{align}
 F_1(q)&=\psi(\tfrac32+q)+\psi(\tfrac32-q)+\gamma_E-1-\log4\pi,\\
 F_2(q)&=1+\frac{\pi^2}{6}-8\frac{1+4q^2}{(1-4q^2)^2}
 +\frac{\pi^2}{\cos^2\pi q}.
\end{align}
Here $\psi'(1)=\sum_{k\ge0}(k+1)^{-2}=\pi^2/6$.
At fixed $q$,
\begin{equation}
 16\pi^2T_d(q)=(\tfrac14-q^2)
 \left[-\frac1\eps+F_1(q)-\frac\eps2(F_1(q)^2+F_2(q))+O(\eps^2)\right].
\end{equation}
At fixed $M_j^2$ and $\Rcal$,
\begin{align}
 q_j(\eps)^2
 &=\frac{(d-1)^2}{4}-d(d-1)\frac{M_j^2}{\Rcal}\nonumber\\
 &=q_j(0)^2-\left(\frac38+\frac76q_j(0)^2\right)\eps
 +\left(\frac14+\frac13q_j(0)^2\right)\eps^2\nonumber\\
 &=\frac94-12\frac{M_j^2}{\Rcal}
 -\left(3-14\frac{M_j^2}{\Rcal}\right)\eps
 +\left(1-4\frac{M_j^2}{\Rcal}\right)\eps^2.
\end{align}
Writing $q=q(0)$ and $s=q^2$, this gives
\begin{align}\label{eq:Tmoving}
 16\pi^2T_d(q(\eps))={}&-\frac{1/4-q^2}{\eps}
 +(\tfrac14-q^2)F_1(q)-\frac38-\frac76q^2\nonumber\\
 &-\left[\frac{1/4-q^2}{2}(F_1(q)^2+F_2(q))
 -\frac14-\frac13q^2\right]\eps\nonumber\\
 &-\left(\frac38+\frac76q^2\right)
 \partial_s[(\tfrac14-q^2)F_1(q)]\eps+O(\eps^2).
\end{align}
The last line contains a moving-label derivative term. Such terms will cancel
from the forest-subtracted expression below. To form tadpole products, write
\begin{equation}
 16\pi^2T_d(q(\eps))=-\frac{Y(q)}\eps+A_0(q)+A_1(q)\eps+O(\eps^2).
\end{equation}
Then
\begin{align}
 (16\pi^2)^2T_{q_i(\eps)}T_{q_j(\eps)}={}&\frac{Y(q_i)Y(q_j)}{\eps^2}
 -\frac{Y(q_i)A_0(q_j)+A_0(q_i)Y(q_j)}\eps\nonumber\\
 &+A_0(q_i)A_0(q_j)-Y(q_i)A_1(q_j)-A_1(q_i)Y(q_j)+O(\eps).
\end{align}

\subsubsection{The watermelon}
The factorisation in Remark~\ref{rm-4} gives, at fixed labels,
\begin{align}
 (16\pi^2)^2\I(\bq)={}&\frac1{\eps^2}
 \left(\frac12\sum_{j=1}^3q_j^2-\frac38\right)\nonumber\\
 &+\frac1\eps\left[
 \sum_{j=1}^3(\tfrac14-q_j^2)(F_1(q_j)-\tfrac12)-\frac12\right]
 +C(\bq)+O(\eps).
\end{align}
Here $C=C_0$ is made explicit in Section~\ref{sec:C}.
Inserting the moving labels gives
\begin{align}
 (16\pi^2)^2\I(\bq(\eps))={}&\frac1{\eps^2}
 \left(\frac12\sum_jq_j^2-\frac38\right)
 +\frac1\eps\left[\sum_j(\tfrac14-q_j^2)(F_1(q_j)+\tfrac1{12})-\frac32\right]
 \nonumber\\
 &+C(\bq)-\frac3{16}-\frac5{12}\sum_jq_j^2
 -\sum_j\left(\frac38+\frac76q_j^2\right)
 \partial_{s_j}[(\tfrac14-q_j^2)F_1(q_j)]+O(\eps),
\end{align}
where $q_j=q_j(0)$ and $s_j=q_j^2$.

\subsubsection{Restoring physical units}
Since $\Rcal=d(d-1)H^2$, at fixed $\Rcal$ and $d=4-2\eps$, $H\equiv H_\eps$ depends on $\eps$.
Therefore, the unit-radius tadpole and watermelon acquire factors
$H_\eps^{2-2\eps}$ and $H_\eps^{2-4\eps}$, respectively, and, at energy scale $\mu$, each loop
contributes $\mu^{2\eps}$. Thus
\begin{align}
 T^{\rm phys}_{q_j(\eps)}
 &=\frac{\Rcal}{12}
 \exp\left[\left(\log\frac{12\mu^2}{\Rcal}+\frac76\right)\eps
 -\frac{59}{72}\eps^2+O(\eps^3)\right]T_{q_j(\eps)},\\
 \I^{\rm phys}(\bq(\eps))
 &=\frac{\Rcal}{12}
 \exp\left[\left(2\log\frac{12\mu^2}{\Rcal}+\frac76\right)\eps
 -\frac{143}{72}\eps^2+O(\eps^3)\right]\I(\bq(\eps)).
\end{align}
For clarity, define
\begin{equation}
 Y_j=\frac14-q_j^2,\quad {\sigma_j}=\frac38+\frac76q_j^2,\quad
 {\tau_j}=\frac14+\frac13q_j^2,\quad L_\mu=\log\frac{12\mu^2}{\Rcal},
 \quad {\Phi_j}=\partial_{s_j}[Y_jF_1(q_j)].
\end{equation}
These are local abbreviations for the following expansions:
\begin{align}\label{eq:Tphysical-expanded}
 16\pi^2\frac{12}{\Rcal}T^{\rm phys}_{q_j(\eps)}
 ={}&-\frac{Y_j}\eps+Y_jF_1(q_j)-\frac23-Y_jL_\mu\nonumber\\
 &+\eps\Bigg[-\frac{Y_j}{2}(F_1(q_j)^2+F_2(q_j))+{\tau_j}
 +(Y_jF_1(q_j)-{\sigma_j})(L_\mu+\tfrac76)\nonumber\\
 &\hspace{1cm}-Y_j\left(\frac12(L_\mu+\tfrac76)^2-\frac{59}{72}\right)
 -{\sigma_j}{\Phi_j}\Bigg]+O(\eps^2).
\end{align}
{The moving-label Laurent coefficients of the watermelon are}
\begin{align}
 {\hat C_{-2}}=&\frac12\sum_jq_j^2-\frac38,\qquad
 {\hat C_{-1}}=\sum_jY_j(F_1(q_j)+\tfrac1{12})-\frac32,\cr
 {\hat C_0}=&C(\bq)-\frac3{16}-\frac5{12}\sum_jq_j^2-\sum_j{\sigma_j}{\Phi_j}.
\end{align}
The physical watermelon expansion is
\begin{align}\label{eq:Iphysical-expanded}
 (16\pi^2)^2\frac{12}{\Rcal}\I^{\rm phys}(\bq(\eps))
 ={}&\frac{{\hat C_{-2}}}{\eps^2}+\frac{{\hat C_{-1}}+(2L_\mu+7/6){\hat C_{-2}}}\eps+{\hat C_0}
 +(2L_\mu+\tfrac76){\hat C_{-1}}\nonumber\\
 &+\left[\frac12(2L_\mu+\tfrac76)^2-\frac{143}{72}\right]{\hat C_{-2}}+O(\eps).
\end{align}
These compact forms retain all terms required for the forest subtraction.

\subsection{Renormalisation}
We use dimensional regularisation and minimal subtraction, organised by the
forest formula. We first subtract the pole parts of divergent proper
subgraphs and then subtract the remaining local poles. The renormalised
tadpole is
\begin{equation}\label{tadpole-finito}
 16\pi^2\frac{12}{\Rcal}t(M_j)
 =(\tfrac14-q_j^2)F_1(q_j)-\frac23
 -(\tfrac14-q_j^2)\log\frac{12\mu^2}{\Rcal}.
\end{equation}
Its higher-order terms remain those in Eq.~\eqref{eq:Tphysical-expanded}.
The watermelon has three divergent bubble subgraphs, one for each pair of
lines. Each bubble contributes a pole $1/(16\pi^2\eps)$ multiplying the
tadpole of the complementary line~\cite{CEM2024}. The first forest step is
\begin{equation}
 \I^*(\bq(\eps))=\I^{\rm phys}(\bq(\eps))
 -\frac1{16\pi^2\eps}\sum_{j=1}^3T^{\rm phys}_{q_j(\eps)}.
\end{equation}
Substitution gives
\begin{align}\label{pb:forest}
 (16\pi^2)^2\frac{12}{\Rcal}\I^*(\bq(\eps))={}&
 -\frac1{\eps^2}\left(\frac12\sum_jq_j^2-\frac38\right)
 +\frac1\eps\left(\frac12\sum_jq_j^2+\frac18\right)\nonumber\\
 &+C(\bq)+\frac12\sum_j(\tfrac14-q_j^2)
 [F_1(q_j)^2+F_2(q_j)]\nonumber\\
 &+L_\mu\left[\sum_j(\tfrac14-q_j^2)F_1(q_j)
 +\sum_jq_j^2-\frac74\right]\nonumber\\
 &+L_\mu^2\left(\frac12\sum_jq_j^2-\frac38\right)
 -\frac{11}{12}+O(\eps).
\end{align}
The moving-label derivative terms cancel. The remaining poles are local
and are removed by the two-loop counterterms. The fully renormalised
watermelon is
\begin{align}\label{eq:finiteI}
 \mathcal W(M_1,M_2,M_3)=\frac{\Rcal}{12(16\pi^2)^2}\Bigg\{&
 C(\bq)+\frac12\sum_j(\tfrac14-q_j^2)[F_1(q_j)^2+F_2(q_j)]\nonumber\\
 &+L_\mu\left[\sum_j(\tfrac14-q_j^2)F_1(q_j)
 +\sum_jq_j^2-\frac74\right]\nonumber\\
 &+L_\mu^2\left(\frac12\sum_jq_j^2-\frac38\right)-\frac{11}{12}\Bigg\}.
\end{align}
Equivalently,
\begin{align}
 \I^{\rm phys}(\bq(\eps))={}&\frac{\Rcal}{12(16\pi^2)^2}
 \left[-\frac{\frac12\sum_jq_j^2-3/8}{\eps^2}
 +\frac{\frac12\sum_jq_j^2+1/8}{\eps}\right]\nonumber\\
 &+\mathcal W(M_1,M_2,M_3)
 +\frac1{16\pi^2\eps}\sum_jT^{\rm phys}_{q_j(\eps)}+O(\eps).
\end{align}
The cancellation concerns the displayed moving-label derivative terms after
subdivergence subtraction; it does not assert that every intermediate
Laurent coefficient is independent of the continuation prescription.

\subsection{The effective potential}
Equation~\eqref{tadpole-finito} shows that the fixed-mass, fixed-curvature
prescription adds a constant $-2/3$ to the dimensionless finite tadpole.
Its contribution to the integrated one-loop potential is $q^2/3$.
The renormalised contribution of a single mode is
\begin{align}\label{eq:V1}
 \mathcal V_1(M)={}&\frac{\Rcal^2}{144(16\pi^2)}
 \left[\tilde B(q)+\frac14(\tfrac14-q^2)^2
 \left(\psi(\tfrac32+q)+\psi(\tfrac32-q)-2
 -\log\frac{12Q^2}{\Rcal}\right)+\frac{q^2}{3}\right]\nonumber\\
 ={}&\frac{\Rcal}{48}\left[(\tfrac14-q^2)t(M)
 +\frac{\Rcal}{48\pi^2}
 \left(\tilde B(q)-\frac{q^4}{4}+\frac{7q^2}{24}+\frac5{192}\right)\right],
\end{align}
where
\begin{align}\label{Btildefunction}
 \tilde B(q)=B(-iq)
 &=\frac{(1/4-q^2)^2}{4}
 -\int^q\frac14(\tfrac14-y^2)^2
 [\psi'(-\tfrac12+y)-\psi'(-\tfrac12-y)]\,\dd y\nonumber\\
 &=-\frac12\int^{1/4-q^2}\!\dd y\,y^2
 \left[\sum_{n=0}^\infty\frac{2n-1}{2(n(n-1)+y)^2}-\frac1y\right],\\
 q^2&=\frac94-\frac{12M^2}{\Rcal},\qquad
 Q^2=4\pi\mu^2e^{-\gamma_E}.
\end{align}
The subscript $r$ below denotes renormalised quantities. The masses are
\begin{equation}
 M_0^2=m_r^2+\xi_r\Rcal+3c_r\varphi_r^2,\qquad
 M_1^2=m_r^2+\xi_r\Rcal+c_r\varphi_r^2.
\end{equation}
The mass integration leaves a field-independent curvature normalisation.
Only local ambiguities of the form allowed by the gravitational action can
be absorbed into its finite couplings for varying $\Rcal$. The scalar
potential below is specified up to this gravitational normalisation; we do
not infer an arbitrary curvature function from the tadpole equation.

The double-tadpole contractions have weights $3$, $2(N-1)$ and $N^2-1$
for masses $(M_0,M_0)$, $(M_0,M_1)$ and $(M_1,M_1)$, respectively~\cite{ACM}.
Thus
\begin{equation}
 V_T(\varphi_r)=\frac{c_r}{4}
 [3t(M_0)^2+2(N-1)t(M_0)t(M_1)+(N^2-1)t(M_1)^2].
\end{equation}
The cubic vertices give six contractions for three radial lines and
$2(N-1)$ for one radial and two transverse lines. Including the expansion
factor $1/2$, their contribution is
\begin{equation}
 V_W(\varphi_r)=-c_r^2\varphi_r^2
 [3\mathcal W(M_0,M_0,M_0)+(N-1)\mathcal W(M_1,M_1,M_0)].
\end{equation}
Restoring $\hbar$, the final potential is
\begin{align}\label{eq:Vfull}
 V_{\rm eff}(\varphi_r)={}&\mathcal L_{g,r}
 +\frac12(m_r^2+\xi_r\Rcal)\varphi_r^2
 +\frac{c_r}{4}\varphi_r^4
 +\hbar\big[\mathcal V_1(M_0)+(N-1)\mathcal V_1(M_1)\big]
 \nonumber\\
 &+\hbar^2\frac{c_r}{4}\big[
 3t(M_0)^2+2(N-1)t(M_0)t(M_1)+(N^2-1)t(M_1)^2\big]
 \nonumber\\
 &-\hbar^2c_r^2\varphi_r^2\big[
 3\mathcal W(M_0,M_0,M_0)+(N-1)\mathcal W(M_1,M_1,M_0)\big]
 +O(\hbar^3).
\end{align}

For explicit evaluation define
\begin{align}
 F(M)&=\psi(\tfrac32+q(M))+\psi(\tfrac32-q(M))-1,\\
 G(M)&=1+\frac{\pi^2}{6}
 +\frac16\left(\frac{M^2}{\Rcal}-\frac16\right)^{-1}
 -\frac1{144}\left(\frac{M^2}{\Rcal}-\frac16\right)^{-2}
 +\frac{\pi^2}{\cos^2(\pi q(M))},\\
 q(M)^2&=\frac94-\frac{12M^2}{\Rcal}.
\end{align}
The tadpole can then be written compactly as
\begin{equation}\label{eq:t-final}
 t(M)=\frac{\Rcal}{16\pi^2}
 \left[\left(\frac{M^2}{\Rcal}-\frac16\right)
 \left(F(M)-\log\frac{12Q^2}{\Rcal}\right)-\frac1{18}\right].
\end{equation}
The finite-part conversion is
\begin{equation}\label{pb:CQ}
 C_Q=C+2\lambda_E\sum_j(\tfrac14-q_j^2)F(M_j)
 +(\lambda_E^2-\lambda_E)\sum_j(\tfrac14-q_j^2)-\lambda_E,
 \qquad\lambda_E=\gamma_E-\log4\pi.
\end{equation}
For any three line masses, set
\begin{equation}
 x_a=\frac{M_a^2}{\Rcal}-\frac16,\qquad
 L_Q=\log\frac{12Q^2}{\Rcal},\qquad
 D_{ijk}=C_Q(q(M_i),q(M_j),q(M_k)).
\end{equation}
Then Eq.~\eqref{eq:finiteI} becomes
\begin{align}\label{eq:W-final}
 \mathcal W(M_i,M_j,M_k)=&\frac{\Rcal}{(16\pi^2)^2}\Bigg\{
 \frac{D_{ijk}-11/12}{12}
 +\frac12\sum_{a\in(i,j,k)}x_a[F(M_a)^2+G(M_a)]\nonumber\\
 &+L_Q\left[\sum_{a\in(i,j,k)}x_aF(M_a)
 -\sum_{a\in(i,j,k)}\frac{M_a^2}{\Rcal}+\frac5{12}\right]
 -\frac{L_Q^2}{2}\sum_{a\in(i,j,k)}x_a\Bigg\}.
\end{align}
The sums run over the three propagator lines, counting repeated masses
with multiplicity. In particular,
\begin{align}
 \mathcal W(M_1,M_1,M_0)={}&\frac{\Rcal}{(16\pi^2)^2}\Bigg\{
 \frac{D_{110}-11/12}{12}+x_1[F(M_1)^2+G(M_1)]\nonumber\\
 &\quad+\frac{x_0}{2}[F(M_0)^2+G(M_0)]\nonumber\\
 &\quad+L_Q\left[2x_1F(M_1)+x_0F(M_0)
 -\frac{2M_1^2+M_0^2}{\Rcal}+\frac5{12}\right]\nonumber\\
 &\quad-L_Q^2\left(\frac{M_1^2}{\Rcal}+\frac{M_0^2}{2\Rcal}-\frac14\right)\Bigg\},\\
 \mathcal W(M_0,M_0,M_0)={}&\frac{\Rcal}{(16\pi^2)^2}\Bigg\{
 \frac{D_{000}-11/12}{12}+\frac{3x_0}{2}[F(M_0)^2+G(M_0)]\nonumber\\
 &\quad+L_Q\left[3x_0F(M_0)-\frac{3M_0^2}{\Rcal}+\frac5{12}\right]\nonumber\\
 &\quad-L_Q^2\left(\frac{3M_0^2}{2\Rcal}-\frac14\right)\Bigg\}.
\end{align}

\subsection{The finite-part function \texorpdfstring{$C(\bq)$}{C(q)}}\label{sec:C}
By definition, $(16\pi^2)^{-2}C(\bq)$ is the finite part of
Eq.~\eqref{eq:flatgen}. All expansions in this subsection are at fixed
labels. Each hypergeometric parameter is either constant in $\eps$ or
affine with slope $\pm1/2$. Away from parameter poles, the functions
$\mathcal F_r$ are analytic near $\eps=0$.
If the coefficient of $\mathcal F_r$ in Eq.~\eqref{eq:flatSigma}, multiplied
by $(16\pi^2)^2$, is
\begin{equation}\label{eq:plaurent}
 \frac{p_{-2}^{(r)}}{\eps^2}+\frac{p_{-1}^{(r)}}\eps+p_0^{(r)}+O(\eps),
\end{equation}
then
\begin{equation}\label{eq:C0deriv}
 C_0=\mathcal B_0-\sum_{r=1}^3\left[
 \frac{p_{-2}^{(r)}}2\frac{\dd^2\mathcal F_r}{\dd\eps^2}\bigg|_0
 +p_{-1}^{(r)}\frac{\dd\mathcal F_r}{\dd\eps}\bigg|_0
 +p_0^{(r)}\mathcal F_r(0)\right].
\end{equation}
The coefficients are built from Gamma functions and rational factors.
For a compact explicit expression, define the following four pairs at $d=4$:
\begin{equation}\label{eq:AB-four}
 A_{st}^{(4)}=\frac{1/2+q_3+sq_1+tq_2}{2},\qquad
 B_{st}^{(4)}=\frac{3/2+q_3+sq_1+tq_2}{2},\qquad s,t=\pm1,
\end{equation}
and
\begin{equation}\label{eq:Pi}
 \Pi=\prod_{s,t=\pm1}\frac{\Gamma(B_{st}^{(4)})}{\Gamma(A_{st}^{(4)})},
 \qquad P_0=\prod_{s,t=\pm1}(\tfrac32+q_3+sq_1+tq_2).
\end{equation}
\begin{theorem}
The finite part is
\begin{align}\label{eq:C0replacement-dderivatives}
 C=\mathcal B_0-\frac{4(q_3+1)}{P_0}\sum_{r=1}^3\Bigg[
 &2u_r\partial_d^2\mathcal F_r(4)
 -2(w_r+\ell u_r)\partial_d\mathcal F_r(4)\nonumber\\
 &+\left(r_r+\ell w_r+\frac{\ell^2+m}{2}u_r\right)\mathcal F_r(4)\Bigg],
\end{align}
where
\begin{align}
 \mathcal B_0&=2\left(\frac{{X_{\mathcal B}}^2}{2}+{Y_{\mathcal B}}\right)\Pi,\\
 {X_{\mathcal B}}&=2\log\pi+3-2\gamma_E+2\pi\tan(\pi q_3)
 -\frac12\sum_{s,t}[\psi(A_{st}^{(4)})+\psi(B_{st}^{(4)})],\\
 {Y_{\mathcal B}}&=\frac{\pi^2}{2}+\frac52
 +\frac18\sum_{s,t}[\psi'(B_{st}^{(4)})-\psi'(A_{st}^{(4)})],\\
 \ell&=-1+\sum_{s,t}\frac1{3/2+q_3+sq_1+tq_2},\qquad
 m=-1+\sum_{s,t}\frac1{(3/2+q_3+sq_1+tq_2)^2}.
\end{align}
Writing $Y_q=1/4-q^2$, the remaining coefficients are
\begin{align}
 u_1&=2Y_{q_1}Y_{q_2},\\
 u_2&=-(Y_{q_1}+Y_{q_2})Y_{q_3+1},\\
 u_3&=(q_2^2-q_1^2)^2,\\
 w_1&=-2[Y_{q_1}a_{q_2}+Y_{q_2}a_{q_1}],\\
 w_2&=(Y_{q_1}+Y_{q_2})a_{q_3+1}
 +Y_{q_3+1}(a_{q_1}+a_{q_2}),\\
 w_3&=(q_2^2-q_1^2)^2\chi_3
 -(q_2^2-q_1^2)(a_{q_1}-a_{q_2}),\\
 r_1&=2[a_{q_1}a_{q_2}-Y_{q_1}b_{q_2}-Y_{q_2}b_{q_1}],\\
 r_2&=(Y_{q_1}+Y_{q_2})b_{q_3+1}
 +Y_{q_3+1}(b_{q_1}+b_{q_2})-(a_{q_1}+a_{q_2})a_{q_3+1},\\
 r_3&=(q_2^2-q_1^2)^2
 \left(\frac{\chi_3^2}{2}+\frac12+\frac{\pi^2}{12}
 +\frac{\pi^2}{2\cos^2\pi q_3}\right)\nonumber\\
 &\quad-(q_2^2-q_1^2)(a_{q_1}-a_{q_2})\chi_3
 -(q_2^2-q_1^2)(b_{q_1}-b_{q_2}),
\end{align}
with
\begin{align}
 \chi_3&=\log4\pi+1-\gamma_E-\psi(\tfrac32+q_3)-\psi(-\tfrac12-q_3),\\
 a_q&=1-Y_q[\log4\pi+1-\gamma_E-\psi(\tfrac12+q)-\psi(\tfrac12-q)],\\
 b_q&=\log4\pi-\gamma_E-\psi(\tfrac12+q)-\psi(\tfrac12-q)\nonumber\\
 &\quad-Y_q\left\{\frac12[\log4\pi+1-\gamma_E
 -\psi(\tfrac12+q)-\psi(\tfrac12-q)]^2
 +\frac12+\frac{\pi^2}{12}+\frac{\pi^2}{2\cos^2\pi q}\right\}.
\end{align}
\end{theorem}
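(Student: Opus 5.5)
The plan is to start from Theorem~\ref{th:solution}, $\I=\Bcal_3-\Sigma_d$, and use Eq.~\eqref{eq:C0deriv}. The argument splits into two independent parts: the finite part $\mathcal B_0$ of the boundary term, and the Laurent data $p^{(r)}_{-2},p^{(r)}_{-1},p^{(r)}_0$ of the coefficients multiplying $\mathcal F_r$ in Eq.~\eqref{eq:Sigmahyp}. Because Remark~\ref{rm-4} guarantees that the $\mathcal F_r$ are analytic near $d=4$, the chain rule $\dd/\dd\eps=-2\,\partial_d$ turns Eq.~\eqref{eq:C0deriv} into the form with $\partial_d^2\mathcal F_r(4)$ and $\partial_d\mathcal F_r(4)$. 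The factors $2$ and $-2$ in front of $u_r$ and $(w_r+\ell u_r)$ are exactly $\tfrac12\cdot 4$ and $-2$ from this substitution. The overall sign is fixed by $\I=\Bcal-\Sigma$.

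\emph{Boundary term.} In Eq.~\eqref{eq:flatB} I would write $\Bcal_3=\exp(\log\Bcal_3)$ and expand at fixed labels with $h=3/2-\eps$ and $a=1/2-\eps$. Then $B_{\boldsymbol\epsilon}$ shifts by $-\eps/2$ and $A_{\boldsymbol\epsilon}=B_{\boldsymbol\epsilon}-a$ shifts by $+\eps/2$. This yields $\Pi$ at leading order, together with the digamma sum $-\tfrac12\sum[\psi(A)+\psi(B)]$ at first order and $\tfrac18\sum[\psi'(B)-\psi'(A)]$ at second order. The explicit prefactors then need to be expanded one by one:
\begin{itemize}
\item $2^{4a}$ contributes $4\cdot 2^{-4\eps}$;
\item the ratio of sines becomes $-\sin\pi(h+q_3)/\sin\pi(h-q_3)$, whose logarithmic derivative produces $2\pi\tan\pi q_3$;
\item $b_h$ contributes a double pole from $\Gamma(3/2-h)\Gamma(2-2h)=\Gamma(\eps)\Gamma(-1+2\eps)$.
\end{itemize}
Collecting everything, $\Bcal_3$ should take the form $2\Pi\,\eps^{-2}(16\pi^2)^{-2}\exp(X_{\mathcal B}\eps+Y_{\mathcal B}\eps^2)$ up to normalisation. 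Its $\eps^0$ coefficient is then $2(X_{\mathcal B}^2/2+Y_{\mathcal B})\Pi$, which is $\mathcal B_0$. I would check the constants $3-2\gamma_E+2\log\pi$ and $\pi^2/2+5/2$ by collecting the $\psi(1)$, $\psi(2)$, $\psi'$ and $\log 4\pi$ contributions.

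\emph{Coefficients $p^{(r)}$.} The prefactor $4e(d-2)/(16\prod_{\boldsymbol\epsilon}B_{\boldsymbol\epsilon})$ equals $\tfrac{(q_3+1)(2-2\eps)}{8\prod_{s,t}(B^{(4)}_{st}-\eps/2)}$. Its logarithmic expansion gives $4(q_3+1)/P_0$ times $1-\ell\eps+\tfrac{\ell^2+m}{2}\eps^2$, where the $-1$ in $\ell$ and $m$ comes from the factor $(1-\eps)$. For the tadpoles, I would write each one as $16\pi^2T_q=-Y_q/\eps+a_q'+\ldots$ using Eq.~\eqref{eq:Tlog} at fixed labels, shifted so that it is expressed through $\psi(\tfrac12\pm q)$. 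The relation $\psi(\tfrac32+q)=\psi(\tfrac12+q)+1/(\tfrac12+q)$ generates the constant $1$ in $a_q$, and similarly for $b_q$. Concretely, I expect $(16\pi^2)T_q=-\eps^{-1}[Y_q-a_q\eps+b_q\eps^2+\dots]$. Products of two such series then give $u_r/\eps^2$, $w_r/\eps$ and $r_r$. The third term needs the extra factor $-(\kappa_1-\kappa_2)/(4\gamma\beta\,\kappa\text{-stuff})$. Here $4\gamma\beta=-\kappa_{q_3+1}$ becomes $q$-quadratic, and $\kappa_1-\kappa_2=q_2^2-q_1^2$ exactly. The combination $T_{q_3+1}/(4\gamma\beta)$ should simplify to a tadpole-like object with pole residue $(q_2^2-q_1^2)$ times a $q_3$-independent piece, producing the logarithmic factor $\chi_3$ with $\psi(\tfrac32+q_3)+\psi(-\tfrac12-q_3)$. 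Multiplying the two Laurent series gives $p^{(r)}_{-2}\propto u_r$, $p^{(r)}_{-1}\propto w_r+\ell u_r$ (up to sign), and $p^{(r)}_0\propto r_r+\ell w_r+\tfrac{\ell^2+m}{2}u_r$. Substituting these into Eq.~\eqref{eq:C0deriv} gives the stated formula.

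The main obstacle will be bookkeeping in the third channel. There, $T_{q_3+1}$, $\kappa_{q_3+1}$ and the $\eps$-dependence of $\gamma\beta$ (through $a$) must be combined into the clean form $(q_2^2-q_1^2)^2$, $\chi_3$, and $\tfrac{\pi^2}{2\cos^2\pi q_3}$. I would first verify this combination separately by expanding $T_{q_3+1}/\kappa_{q_3+1}$ via Eq.~\eqref{eq:tadpole}. A consistency check is that the resulting poles must reproduce the $\eps^{-2}$ and $\eps^{-1}$ coefficients already stated for $(16\pi^2)^2\I$.
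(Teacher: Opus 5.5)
Your proposal follows the paper's own (brief) proof: expand $\Bcal_3$, the rational prefactor and the tadpole products at fixed labels, then convert with $\partial_\eps=-2\partial_d$. Your boundary-term bookkeeping is right, namely the overall $2\Pi/\eps^2$, the constants $2\log\pi+3-2\gamma_E$ and $\pi^2/2+5/2$, and the $2\pi\tan\pi q_3$ term.

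Two of your guessed signs need fixing when you carry it out.
\begin{itemize}
\item The prefactor is $\frac{4(q_3+1)}{P_0}\bigl[1+\ell\eps+\tfrac{\ell^2+m}{2}\eps^2+O(\eps^3)\bigr]$, not $1-\ell\eps$. This is because $1/\prod_{s,t}(\tfrac32+q_3+sq_1+tq_2-\eps)$ grows with $\eps$.
\item The tadpole is $16\pi^2T_q=-\kappa_qE_q/\eps=-Y_q/\eps+a_q+b_q\eps+O(\eps^2)$, not $-b_q\eps$.
\end{itemize}
The same exact factorisation settles your third-channel concern: $16\pi^2T_{q_3+1}/\kappa_{q_3+1}=-E_{q_3+1}/\eps$, with $g_{q_3+1}=\chi_3$ and $\Omega_{q_3+1}=\Omega_{q_3}$.
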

\begin{proof}
Expand Eq.~\eqref{eq:flatgen} directly in $\eps$, using
$\partial_\eps=-2\partial_d$. The tadpole products give $u_r,w_r,r_r$;
the rational prefactor gives $\ell,m$, and the boundary term gives
$\mathcal B_0$.
\end{proof}
The dimension derivatives in this formula can be evaluated by convergent
series. At $d=4$ the undifferentiated constants themselves admit closed
expressions in Gamma functions and rational functions.

\begin{theorem}[Evaluation at $d=4$]\label{thm:closed}
With $\Pi,P_0$ from Eq.~\eqref{eq:Pi},
\begin{align}\label{eq:F1closed}
 \mathcal F_1(4)
 &=\frac{P_0}{16(q_3+1)Y_{q_1}Y_{q_2}}
 \left[\frac14+q_3^2-q_1^2-q_2^2-4\Pi\right],\\
 \mathcal F_2(4)
 &=\frac{P_0}{16(q_3+1)Y_{q_1}Y_{q_2}Y_{q_3+1}}
 \left[(q_2^2-q_1^2)^2
 -(Y_{q_1}+Y_{q_2})Y_{q_3}
 -4(Y_{q_1}+Y_{q_2})\Pi\right],\label{eq:F2closed}\\
 \mathcal F_3(4)&=\mathcal F_1(4).\label{eq:F3collapse}
\end{align}
\end{theorem}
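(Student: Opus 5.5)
The plan is to exploit the fact that at $d=4$ we have $a=h-1=\tfrac12$, so $A_{\boldsymbol\epsilon}=B_{\boldsymbol\epsilon}-\tfrac12$. Two things then simplify. First, $\mathcal H_3(\bq)=\Pi$. Second, the ratio $\mathcal H_3(\bq)/\mathcal H_3(\bq+2n\ee_3)$ is the product $\prod_{\boldsymbol\epsilon}(B_{\boldsymbol\epsilon}-\tfrac12)_n/(B_{\boldsymbol\epsilon})_n$, which behaves like $n^{-2}$ times a constant. Rather than summing the ${}_6F_5$ and ${}_8F_7$ series by classical summation theorems, I would evaluate them as \emph{telescoping} sums. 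This mirrors the way they arose in Theorem~\ref{th:recurrence}: each summand should be a first difference along the lattice $q_3\mapsto q_3+2$. At $d=4$ the tadpole ratio in Eq.~\eqref{Tq3n} also becomes rational, namely $T_{q+2}/T_q=Y_{q+2}/Y_q$, because $16\pi^2 T$ is proportional to $Y_q$ at leading order. Consequently every summand of $\mathcal F_r(4)$ is $\Theta_n$ times a rational function of $n$.

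\textbf{Step 1 (telescoping ansatz).} For each $r$, I look for a polynomial (or low-degree rational) function $\rho_r(q_3)$, with coefficients depending on $q_1,q_2$, such that the $n$-th summand equals $D_n-D_{n+1}$, where
\begin{equation*}
D_n=\frac{\mathcal H_3(\bq)}{\mathcal H_3(\bq+2n\ee_3)}\,\rho_r(q_3+2n).
\end{equation*}
Dividing the ansatz by the $\mathcal H_3$-ratio reduces it to the single first-order identity
\begin{equation*}
\rho_r(q_3)-\prod_{\boldsymbol\epsilon}\frac{B_{\boldsymbol\epsilon}-\tfrac12}{B_{\boldsymbol\epsilon}}\,\rho_r(q_3+2)=\text{(rational source)},
\end{equation*}
equivalently $P_3^-\rho(q_3+1)-P_3^+\rho(q_3-1)\propto$ source, shifted by one. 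This is just the mass-shift recurrence of Proposition~\ref{prop:rec} restricted to polynomial data. Matching degrees in $q_3$ should force $\rho_r$ to be quadratic in $q_3$. The shape of the answers, namely $\tfrac14+q_3^2-q_1^2-q_2^2$ and $(q_2^2-q_1^2)^2-(Y_{q_1}+Y_{q_2})Y_{q_3}$, tells me what to aim for. I would solve for the coefficients by comparing powers of $q_3$, using $P_3^\pm$ written through the K\"all\'en function as in Eq.~\eqref{eq:PR}.

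\textbf{Step 2 (boundary limit).} The sum equals $D_0-\lim_{N\to\infty}D_N$. Here $D_0=\rho_r(q_3)$ gives the rational part of Eqs.~\eqref{eq:F1closed}--\eqref{eq:F2closed}. For the limit, Gamma asymptotics give $\mathcal H_3(\bq+2N\ee_3)\sim(2N)^{2}$, so $D_N\to\Pi\cdot\lim_N\rho_r(q_3+2N)/(4N^2)$. This is $\Pi$ times the leading coefficient of $\rho_r$, which produces the $-4\Pi$ terms. This is the $d=4$ shadow of the boundary constant in Eq.~\eqref{wm:eq:boundaryconstant}. Finally I multiply by the prefactors: $P_0/(16(q_3+1)\prod Y)$ arises from inverting $4e(d-2)/(16\prod B_{\boldsymbol\epsilon})$ at $d=4$, where $\prod B_{\boldsymbol\epsilon}=P_0/16$, together with the tadpole normalisations.

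\textbf{Step 3 ($\mathcal F_3=\mathcal F_1$).} At $a=\tfrac12$ the extra parameters of $\mathcal F_3$ are $(q_3+\tfrac32)/2$ and $(q_3+\tfrac52)/2$ over $(q_3+\tfrac52)/2$ and $(q_3+\tfrac72)/2$. After cancellation, $\mathcal F_3$'s summand differs from $\mathcal F_1$'s only by a factor that is a first difference. I would show directly that
\begin{equation*}
\text{summand}(\mathcal F_3)-\text{summand}(\mathcal F_1)=\tilde D_n-\tilde D_{n+1},
\end{equation*}
with $\tilde D_n\to0$ and $\tilde D_0=0$. If that fails, I would instead run Steps 1--2 for $r=3$ and compare the results.

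The main obstacle is Step 1. One must verify that a polynomial $\rho_r$ exists, that is, that the overdetermined coefficient matching closes. The convergence and limit in Step 2 are routine once $\rho_r$ is known. I would first try the most general quadratic $\rho(q_3)=\alpha q_3^2+\beta q_3+\gamma$. If the system does not close, I would allow a $1/Y_{q_3+1}$ denominator, as suggested by the form of $\mathcal F_2$.
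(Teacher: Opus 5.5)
Your strategy is the paper's. The paper writes $K_n=W_{n+1}\rho(q_3+2n+2)-W_n\rho(q_3+2n)$, where $W_n$ is your $\mathcal H_3$-ratio and $\rho$ is an even quadratic: $\rho(x)=(Y_x-S)/(4P)$ for $\mathcal F_1$ and $\rho(x)=(SY_x-D^2)/(4P)$ for $\mathcal F_2$, with $S=Y_{q_1}+Y_{q_2}$, $D=Y_{q_1}-Y_{q_2}$, $P=Y_{q_1}Y_{q_2}$. So your quadratic ansatz in Step 1 does close, and no $1/Y_{q_3+1}$ is needed. That factor in \eqref{eq:F2closed} is only the normalisation of the ${}_8F_7$ summand, $(K_n/K_0)(Y_{q_{3,n}}/Y_{q_3+1})$. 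However, two of your explicit computations are wrong, and as written they would not reproduce the statement.

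First, in Step 2, $\mathcal H_3(\bq+2N\ee_3)\sim N^{2}$, not $(2N)^2$. The shift $q_3\to q_3+2N$ moves every $B_{\boldsymbol\epsilon}$ and $A_{\boldsymbol\epsilon}$ by $N$, so each of the four ratios $\Gamma(B_{\boldsymbol\epsilon}+N)/\Gamma(B_{\boldsymbol\epsilon}-\tfrac12+N)$ grows like $N^{1/2}$. This is the paper's $N^{4a}$; the $(2N)^{4a}$ belongs to $\I(1/2,1/2,\zeta+2N)$, whose Gamma arguments are not halved. Hence $D_N\to4\alpha\Pi$, not $\alpha\Pi$, where $\alpha$ is the $q_3^2$-coefficient of $\rho_r$. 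For $\sum_nK_n$ your sign convention gives $\rho(x)=(S-Y_x)/(4P)$ and $\alpha=1/(4P)$. Your limit would then produce $(S-Y_{q_3}-\Pi)/(4P)$ instead of $(S-Y_{q_3}-4\Pi)/(4P)$. The coefficient $4$ of $\Pi$ in the statement comes precisely from this asymptotics.

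Second, in Step 3 you mis-evaluated the lower parameters of $\mathcal F_3$ (you used $+a$ instead of $-a$). At $a=\tfrac12$ they are $(q_3+2-a)/2=(q_3+\tfrac32)/2$ and $(q_3+3-a)/2=(q_3+\tfrac52)/2$. These coincide with the extra upper parameters $(q_3+1+a)/2$ and $(q_3+2+a)/2$, so $\mathcal F_3(4)=\mathcal F_1(4)$ holds term by term; that is the paper's entire argument. With the parameters you wrote, the $\mathcal F_3$ summand would be the $\mathcal F_1$ summand times $y/(y+n)$, with $y=(q_3+\tfrac32)/2$. The two sums would then genuinely differ, for instance for real parameters with all summands positive. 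So the telescoping $\tilde D_n$ with $\tilde D_0=0$ and $\tilde D_n\to0$ that you propose cannot exist, and your fallback would contradict \eqref{eq:F3collapse}.
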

\begin{proof}
In $\mathcal F_3(4)$, two upper parameters cancel two lower parameters,
leaving $\mathcal F_1(4)$ before summation.
For the other two identities, put
\begin{equation}
 S=Y_{q_1}+Y_{q_2},\qquad D=Y_{q_1}-Y_{q_2},\qquad P=Y_{q_1}Y_{q_2},
\end{equation}
and, with $q_{3,n}=q_3+1+2n$, define
\begin{equation}\label{eq:K}
 W_n=\prod_{s,t}\frac{(A_{st}^{(4)})_n}{(B_{st}^{(4)})_n},\qquad
 K_n=\frac{4q_{3,n}W_n}
 {\prod_{s,t}(1/2+q_{3,n}+sq_1+tq_2)}.
\end{equation}
Two rational identities telescope the sums:
\begin{align}\label{eq:tel}
 W_{n+1}\frac{Y_{q_{3,n}+1}-S}{4P}
 -W_n\frac{Y_{q_{3,n}-1}-S}{4P}&=K_n,\nonumber\\
 W_{n+1}\frac{SY_{q_{3,n}+1}-D^2}{4P}
 -W_n\frac{SY_{q_{3,n}-1}-D^2}{4P}&=K_nY_{q_{3,n}}.
\end{align}
The upper endpoint terms tend to $-\Pi/P$ and $-S\Pi/P$, respectively.
Hence
\begin{equation}\label{eq:sums}
 \sum_{n\ge0}K_n=\frac{S-Y_{q_3}-4\Pi}{4P},\qquad
 \sum_{n\ge0}K_nY_{q_{3,n}}=\frac{D^2-SY_{q_3}-4S\Pi}{4P}.
\end{equation}
The ${}_6F_5$ summand is $K_n/K_0$, where $K_0=4(q_3+1)/P_0$.
The ${}_8F_7$ summand at $d=4$ is
$(K_n/K_0)(Y_{q_{3,n}}/Y_{q_3+1})$. These give the stated evaluations.
\end{proof}

\paragraph{An integral-free convergent series.}
Introduce
\begin{align}
 Y_q=&\frac14-q^2,\qquad
 g_q=\log4\pi+1-\gamma_E-\psi(\tfrac12+q)-\psi(\tfrac12-q),\cr
 \Omega_q=&\frac12+\frac{\pi^2}{12}+\frac{\pi^2}{2\cos^2\pi q}.
\end{align}
Using Eq.~\eqref{eq:AB-four}, define the weight and its logarithmic variations:
\begin{align}
 K_n&=\frac{4(q_3+1+2n)}{\prod_{s,t}(3/2+q_3+sq_1+tq_2+2n)}
 \prod_{s,t}\frac{(A_{st}^{(4)})_n}{(B_{st}^{(4)})_n},\\
 L_n&=-1+\frac12\sum_{s,t}\big[
 \psi(n+A_{st}^{(4)})-\psi(A_{st}^{(4)})
 +\psi(n+B_{st}^{(4)}+1)-\psi(B_{st}^{(4)})\big],\\
 {\Lambda_n}&=-1+\frac14\sum_{s,t}\big[
 \psi'(n+A_{st}^{(4)})-\psi'(A_{st}^{(4)})
 -\psi'(n+B_{st}^{(4)}+1)+\psi'(B_{st}^{(4)})\big].
\end{align}
\begin{theorem}
With $q_{3,n}=q_3+1+2n$, the finite part is
\begin{align}\label{eq:C0replacement-shortseries}
 C_0=\mathcal B_0-\sum_{n=0}^\infty K_n\Big[&
 Y_{q_1}Y_{q_2}Z_n(q_1,q_2)
 +\frac12Y_{q_1}(q_2^2-q_1^2-Y_{q_{3,n}})Z_n(q_1,q_{3,n})\nonumber\\
 &-\frac12Y_{q_2}(q_2^2-q_1^2+Y_{q_{3,n}})Z_n(q_2,q_{3,n})
 -2(Y_{q_1}+Y_{q_2})J_n(q_1,q_2)\nonumber\\
 &+(Y_{q_2}+Y_{q_{3,n}})J_n(q_1,q_{3,n})
 +(Y_{q_1}+Y_{q_{3,n}})J_n(q_2,q_{3,n})\nonumber\\
 &+Y_{q_1}+Y_{q_2}-2Y_{q_{3,n}}\Big],
\end{align}
where
\begin{equation}
 J_n({q,q'})=L_n+g_{{q}}+g_{{q'}},\qquad
 Z_n({q,q'})=J_n({q,q'})^2+{\Lambda_n}+2\Omega_{{q}}+2\Omega_{{q'}}.
\end{equation}
\end{theorem}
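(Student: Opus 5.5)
Our plan is to derive the series directly from the separated representation of Theorem~\ref{th:solution} rather than from the closed formula with $\partial_d\mathcal F_r$. We write $\Sigma_d$ as a single sum over $n$, $\Sigma_d=\sum_{n\ge0}\Theta_n\,\Xi_n$, where $\Theta_n$ is the weight defined in the proof of Theorem~\ref{th:solution}. The bracket $\Xi_n$ is the source combination from Eq.~\eqref{Rcal3}, evaluated at the shifted label $q_{3,n}$, and it contains only tadpoles at $q_1,q_2,q_{3,n}$ and $\kappa$'s. The finite part of $(16\pi^2)^2\Sigma_d$ is then obtained termwise; Remark~\ref{rm-4} justifies this, since the summands decay like $n^{3-2d}$ or faster uniformly near $d=4$, so the $\eps$-expansion commutes with the sum.

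\emph{Step 1.} We expand the weight. We write $\Theta_n=K_n\exp[-2\eps L_n+2\eps^2\Lambda_n+\dots]$ up to a common $d$-independent normalisation, using $\partial_\eps=-2\partial_d$. The $d$-dependence enters through $A_{\boldsymbol\epsilon}=B_{\boldsymbol\epsilon}-a$ and $B_{\boldsymbol\epsilon}$, both with slope $\mp1/2$ in $\eps$, and through the factor $(d-2)$. Differentiating $\log[(A)_n/(B+1)_n]$ and $\log(1/\prod B)$ produces the $\psi$-differences in $L_n$. The constant $-1$ in $L_n$ comes from $\log(d-2)$, and the second-order terms give $\Lambda_n$. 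We will check the normalisations against $\ell$ and $m$ at $n=0$ in the preceding theorem.

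\emph{Step 2.} We expand the tadpoles. Using Eq.~\eqref{eq:Tlog} at fixed label, each $16\pi^2T_q=-(Y_q/\eps)\,e^{-F_1\eps+F_2\eps^2/2}$. We rewrite $F_1$ in terms of $g_q$ by shifting $\psi(\tfrac32\pm q)\to\psi(\tfrac12\pm q)$; the rational remainders get absorbed into the prefactors. We rewrite $F_2$ as $2\Omega_q$ plus rational terms. A product $T_qT_{q'}$ times the weight then has the form
\[
\frac{Y_qY_{q'}}{\eps^2}\exp\!\big[-\eps J_n(q,q')+\tfrac{\eps^2}{2}(\Lambda_n+2\Omega_q+2\Omega_{q'})+\dots\big].
\]
Its $\eps^0$ coefficient is $\tfrac12 Y_qY_{q'}Z_n(q,q')$, up to how the rational corrections redistribute. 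This explains the structure $Z_n=J_n^2+\Lambda_n+2\Omega+2\Omega$.

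\emph{Step 3.} We reassemble the three source families. In the pairs $(T_1,T_{3,n})$ and $(T_2,T_{3,n})$, we combine $-(T_1+T_2)T_{3,n}$ with $\frac{\kappa_1-\kappa_2}{\kappa_{3,n}}(T_1-T_2)T_{3,n}$. At $d=4$ we have $\kappa_q=Y_q-1$ shifted appropriately, and $\kappa_1-\kappa_2=q_2^2-q_1^2$. Splitting the $\kappa$-ratio therefore produces the coefficients $\tfrac12Y_{q_1}(q_2^2-q_1^2-Y_{q_{3,n}})$ and $-\tfrac12Y_{q_2}(q_2^2-q_1^2+Y_{q_{3,n}})$. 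The $\eps$-dependence of $\kappa$ and of the $Y$'s at fixed labels ($h$-dependence) produces the $J_n$ cross terms at order $\eps^{-1}\times\eps$. It also produces the rational constants $Y_{q_1}+Y_{q_2}-2Y_{q_{3,n}}$.

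\emph{Step 4.} We collect the terms. The pole terms cancel against $\mathcal B$ as guaranteed by the Laurent form Eq.~\eqref{eq:laurent}, and the $\eps^0$ coefficient of $\mathcal B_k$ is the $\mathcal B_0$ of the previous theorem. Consistency can be checked by summing the $n$-series using the telescoping identities Eq.~\eqref{eq:tel}, which should reproduce Theorem~\ref{thm:closed} for the undifferentiated parts.

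The main obstacle is Step 3: the bookkeeping of the subleading $\eps$-dependence of $\kappa_{q}$, $Y_q$ and $(d-2)$ at fixed labels. These must combine into exactly the $J_n$ coefficients $-2(Y_{q_1}+Y_{q_2})$ and $(Y_{q_2}+Y_{q_{3,n}})$, $(Y_{q_1}+Y_{q_{3,n}})$. They must also combine into the residual rational terms. We would first verify this at $n=0$ by matching the $w_r$, $r_r$ coefficients of the preceding theorem, and then argue that the $n$-dependence enters only through $q_3\to q_{3,n}$ and $K_n$, $L_n$, $\Lambda_n$.
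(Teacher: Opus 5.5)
Your outline follows the paper's route: expand $\Sigma_d=\sum_n\Theta_n\Xi_n$ termwise, combine the weight and tadpole exponentials, and read off the $\eps^0$ coefficient. The gap is that the proposal leaves unproved exactly what the statement asserts, namely the coefficients of the $J_n$-linear terms and the constant $Y_{q_1}+Y_{q_2}-2Y_{q_{3,n}}$. The inputs you give for that step are also wrong.

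The missing idea is the exact factorisation $16\pi^2T_q=-\kappa_qE_q(\eps)/\eps$. Here $\kappa_q=a^2-q^2=Y_q-\eps+\eps^2$ comes from $\Gamma(h\pm q)=(a\pm q)\Gamma(\tfrac12\pm q-\eps)$, and $E_q=\exp[\eps g_q+\eps^2\Omega_q+O(\eps^3)]$ carries no rational remainder. Your ``rational remainders absorbed into the prefactors'', once carried out, is the statement $Y_q\exp[-\eps/Y_q+\eps^2(1/Y_q-1/(2Y_q^2))]=\kappa_q+O(\eps^3)$, but you never identify this prefactor. In Step 3 you instead write $\kappa_q=Y_q-1$ at $d=4$, whereas it equals $Y_q$ there. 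You also invoke an $\eps$-dependence of the $Y_q$, which have none at fixed labels.

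With the factorisation, the $\kappa_{q_{3,n}}$ inside $T_{q_{3,n}}$ cancels the denominator of the third source family exactly. Writing $v=q_{3,n}$ and using $\kappa_{q_1}-\kappa_{q_2}=q_2^2-q_1^2$, the bracket $\eps^2(16\pi^2)^2\Xi_n$ becomes
\[
2\kappa_{q_1}\kappa_{q_2}E_{q_1}E_{q_2}+\kappa_{q_1}(q_2^2-q_1^2-\kappa_v)E_{q_1}E_v-\kappa_{q_2}(q_2^2-q_1^2+\kappa_v)E_{q_2}E_v .
\]
The $J_n$ coefficients are then the $O(\eps)$ coefficients of these quadratic polynomials:
\[
-2(Y_{q_1}+Y_{q_2}),\qquad Y_{q_2}+Y_v,\qquad Y_{q_1}+Y_v,
\]
using $Y_{q_1}-q_2^2+q_1^2=Y_{q_2}$. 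The constants are their $O(\eps^2)$ coefficients $2(Y_{q_1}+Y_{q_2}+1)$, $-(Y_{q_2}+Y_v+1)$ and $-(Y_{q_1}+Y_v+1)$, which sum to $Y_{q_1}+Y_{q_2}-2Y_v$. No ``splitting of the $\kappa$-ratio'' is involved, and nothing needs to be redistributed.

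Your Steps 1 and 2 also contradict each other and the definitions.
\begin{itemize}
\item $L_n$ and $\Lambda_n$ are already the first and second $\eps$-derivatives of $\log\Theta_n$. The $A_{\boldsymbol\epsilon}$ and $B_{\boldsymbol\epsilon}$ have $\eps$-slopes $+\tfrac12$ and $-\tfrac12$, and $\partial_\eps\log(2-2\eps)=-1$ supplies the $-1$. So the weight is $K_n\exp[\eps L_n+\tfrac12\eps^2\Lambda_n]$, not $K_n\exp[-2\eps L_n+2\eps^2\Lambda_n]$; your version would put $4\Lambda_n$ and $-2L_n$ into $Z_n$.
\item The exponent of a tadpole product times the weight is $+\eps J_n$, because $E_q$ contributes $+\eps g_q$. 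With $-\eps J_n$, every $J_n$-linear term in the statement would come out with the wrong sign.
\end{itemize}
Your planned check at $n=0$ against $\ell$, $m$, $w_r$, $r_r$ would expose these errors. But then that check, rather than the argument you give, is doing the work.

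Two smaller points. The poles do not cancel against $\mathcal B$: the watermelon keeps its $\eps^{-2}$ and $\eps^{-1}$ terms, which simply lie outside $C_0$. And the slowest source family decays like $n^{1-d}$, not $n^{3-2d}$, although termwise expansion is still justified.
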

\begin{proof}
For $d=4-2\eps$, $\kappa_q=Y_q-\eps+\eps^2$, and
\begin{align}
 16\pi^2T_q&=-\frac{\kappa_q}\eps E_q(\eps),\\
 E_q(\eps)&=\frac{\Gamma(1+\eps)}{1-\eps}(4\pi)^\eps
 \frac{\Gamma(1/2+q-\eps)\Gamma(1/2-q-\eps)}
 {\Gamma(1/2+q)\Gamma(1/2-q)}
 =\exp[\eps g_q+\eps^2\Omega_q+O(\eps^3)],\\
 \frac{T_q}\eta&=-\frac{Y_q}\eps+a_q+b_q\eps+O(\eps^2),\qquad
 \eta=\frac1{16\pi^2}.
\end{align}
After multiplying by $\eps^2$, the normalised source bracket is
\begin{align}
 &2\kappa_{q_1}\kappa_{q_2}E_{q_1}E_{q_2}
 +\kappa_{q_1}(q_2^2-q_1^2-\kappa_{q_{3,n}})E_{q_1}E_{q_{3,n}}-\kappa_{q_2}(q_2^2-q_1^2+\kappa_{q_{3,n}})E_{q_2}E_{q_{3,n}}.
\end{align}
For $v=q_{3,n}$, the polynomial coefficients are
\begin{equation}
\begin{array}{c|ccc}
 &1&\eps&\eps^2\\\hline
2\kappa_{q_1}\kappa_{q_2}&2Y_{q_1}Y_{q_2}&-2S&2(S+1)\\
\kappa_{q_1}(q_2^2-q_1^2-\kappa_v)
 &Y_{q_1}(q_2^2-q_1^2-Y_v)&Y_{q_2}+Y_v&-(Y_{q_2}+Y_v+1)\\
-\kappa_{q_2}(q_2^2-q_1^2+\kappa_v)
 &-Y_{q_2}(q_2^2-q_1^2+Y_v)&Y_{q_1}+Y_v&-(Y_{q_1}+Y_v+1)
\end{array}
\end{equation}
where $S=Y_{q_1}+Y_{q_2}$. Multiplying by {the corresponding products $E_qE_{q'}$} and
\begin{equation}
 \frac{2q_{3,n}(d-2)W_n(d)}{P_3^+(q_{3,n};d)}
 =K_n\exp[\eps L_n+\tfrac12\eps^2{\Lambda_n}+O(\eps^3)]
\end{equation}
gives Eq.~\eqref{eq:C0replacement-shortseries}.
\end{proof}
This representation uses Gamma functions, digammas, trigammas and one
absolutely convergent series. Appendix~\ref{sec:abel} gives a form whose
summand contains only linear digammas.
\begin{remark}[Convergence and interchange]
On compact sets avoiding poles of the split representation, for large $n$, we have $K_n=O(n^{-5})$, $Y_{q_{3,n}}=O(n^2)$, $g_{q_{3,n}}=O(\log n)$,
$L_n=O(\log n)$ and ${\Lambda_n}=O(1)$. The finite-part summand is therefore
$O(n^{-3}\log^2n)$ and absolutely summable.
After removing its explicit double pole, the generic-dimension summand is
holomorphic in $\eps$. For $|{\rm Re}(\eps)|\le\delta<1/2$, its two slow families
and their first two derivatives are dominated by
$n^{-5+4\delta}(1+\log^2n)$ and $n^{-3+2\delta}(1+\log^2n)$.
The Weierstrass theorem permits termwise differentiation and extraction of
$C_0$. This argument excludes crossing an unremoved polar divisor, but does
not require unequal or ordered masses.
\end{remark}

\subsubsection{Two equal and three equal labels}
For $(q_1,q_2,q_3)=(q_1,q_1,q_0)$, take the repeated labels as spectators.
The difference $q_2^2-q_1^2$ vanishes identically in every dimension, so the
third hypergeometric term drops out. With $q_{0,n}=q_0+1+2n$,
\begin{align}\label{eq:C0replacement-twoequal}
 C_0(q_1,q_1,q_0)=\mathcal B_0(q_1,q_1,q_0)-\sum_{n\ge0}K_n\Big[&
 Y_{q_1}^2\{(L_n+2g_{q_1})^2+{\Lambda_n}+4\Omega_{q_1}\}\nonumber\\
 &-Y_{q_1}Y_{q_{0,n}}
 \{(L_n+g_{q_1}+g_{q_{0,n}})^2+{\Lambda_n}+2\Omega_{q_1}+2\Omega_{q_0}\}\nonumber\\
 &-4Y_{q_1}(L_n+2g_{q_1})
 +2(Y_{q_1}+Y_{q_{0,n}})(L_n+g_{q_1}+g_{q_{0,n}})\nonumber\\
 &+2Y_{q_1}-2Y_{q_{0,n}}\Big].
\end{align}
Here $\Omega_{q_{0,n}}=\Omega_{q_0}$ because their arguments differ by an
integer. A further reduction holds before taking $d\to4$. Set
\begin{equation}
 A_\pm=\frac{2-h+p\pm2q}{2},\quad
 B_\pm=\frac{h+p\pm2q}{2},\quad
 \beta=\frac{p+2-h}{2},\quad\gamma=\frac{p+h}{2},\quad e=\frac{p+1}{2}.
\end{equation}
The two nonzero hypergeometric constants are both ${}_6F_5$:
\begin{align}
 \mathcal F_1(d;q,q,p)&={}_6F_5\!\left[\begin{matrix}
 A_+,\beta,\beta,A_-,e+1,1\\
 B_++1,\gamma+1,\gamma+1,B_-+1,e
 \end{matrix};1\right],\\
 \mathcal F_2(d;q,q,p)&={}_6F_5\!\left[\begin{matrix}
 A_+,\beta,A_-,e+1,\gamma+1/2,1\\
 B_++1,\gamma+1,B_-+1,e,\beta+1/2
 \end{matrix};1\right].
\end{align}
These identities hold in $d$, so dimension derivatives must be total
derivatives along all the displayed $d$-dependent parameters.
For three equal labels, set $q_1=q_0$ in
Eq.~\eqref{eq:C0replacement-twoequal}.

\section{Beta functions}\label{sec:6}
Finally, we derive the beta functions up to two loops.
\subsection{Divergences and beta functions in four dimensions}
Set $v=\varphi_r$, $n=N-1$, $K=16\pi^2$, and suppress the subscript $r$
on the couplings. Define
\begin{equation}
 A=m^2+(\xi-1/6)\Rcal,\qquad
 \Delta_0=A+3cv^2,\qquad\Delta_1=A+cv^2.
\end{equation}
The matter-dependent one-loop counterterm is
\begin{equation}\label{pb:P1}
 \delta V_1=\frac{\Delta_0^2+n\Delta_1^2}{4K\eps},\qquad
 \Delta_0^2+n\Delta_1^2=NA^2+2(N+2)cAv^2+(N+8)c^2v^4.
\end{equation}
The geometric heat-kernel term adds $-N\Rcal^2/(4320K\eps)$ on $S^4$.
The one-loop Hessian insertions are
\begin{align}
 \delta M_{0,1}^2&=\frac{(N+2)cA+3(N+8)c^2v^2}{K\eps},\\
 \delta M_{1,1}^2&=\frac{(N+2)cA+(N+8)c^2v^2}{K\eps}.
\end{align}
From Eq.~\eqref{pb:forest}, the subdivergence-subtracted sunset is
\begin{equation}
 R'\mathcal I_{ijk}=\frac1{K^2}\left[
 \frac{\Delta_i+\Delta_j+\Delta_k}{2\eps^2}
 +\frac{\Rcal/24-(\Delta_i+\Delta_j+\Delta_k)/2}{\eps}\right]
 +\mathcal W_{ijk}+O(\eps).
\end{equation}
Combining the two vacuum topologies with
$\frac12(\delta M_{0,1}^2T_0+n\delta M_{1,1}^2T_1)$ cancels every
nonlocal simple pole. The two-loop action counterterm is
\begin{align}\label{pb:P2}
 \delta V_2={}&\frac1{4K^2\eps^2}
 [N(N+2)cA^2+2(N+2)(N+5)c^2Av^2+(N+8)^2c^3v^4]\nonumber\\
 &-\frac{c^2v^2}{4K^2\eps}
 \left[6(N+2)A+2(5N+22)cv^2-\frac{(N+2)\Rcal}{6}\right].
\end{align}
Here $\delta V_1+\delta V_2$ denotes action counterterms expressed in the
renormalised field. These differ from bare-parameter counterterms once
field normalisation runs. Writing
\begin{equation}
 \delta V=\delta\rho+\delta\kappa\Rcal+\delta b\Rcal^2
 +\frac12(\delta m_A^2+\delta\xi_A\Rcal)v^2+\frac14\delta c_Av^4,
\end{equation}
the defining relations are
\begin{equation}
 \phi_B=Z_\phi^{1/2}\phi,\quad
 m_B^2=\frac{m^2+\delta m_A^2}{Z_\phi},\quad
 \xi_B=\frac{\xi+\delta\xi_A}{Z_\phi},\quad
 c_B=\mu^{2\eps}\frac{c+\delta c_A}{Z_\phi^2}.
\end{equation}
The two-loop simple residues in Eq.~\eqref{pb:P2} are
\begin{align}
 \delta m_{A,21}^2&=-\frac{3(N+2)c^2m^2}{K^2},\\
 \delta\xi_{A,21}&=\frac{(N+2)c^2}{K^2}
 \left[-3(\xi-1/6)+\frac1{12}\right],\\
 \delta c_{A,21}&=-\frac{2(5N+22)c^3}{K^2}.
\end{align}
The index $21$ denotes the coefficient of $1/\eps$ at two loops.
The mass and curvature residues enter the two-loop action RG combinations
with a factor four. For the quartic residue the factor is $2(3-1)=4$,
because its coupling degree is three and $c_B$ carries $\mu^{2\eps}$.

A constant-field potential fixes action combinations, not the wavefunction
counterterm separately. The additional local calculation is the massless
momentum-dependent sunset. Successive one-loop convolutions give
\begin{equation}
 J(p)=\frac{(p^2)^{1-2\eps}}{(4\pi)^{4-2\eps}}
 \frac{\Gamma(1-\eps)^3\Gamma(-1+2\eps)}{\Gamma(3-3\eps)}
 =-\frac{p^2}{4K^2\eps}+O(1).
\end{equation}
For the vertex tensor
$2c(\delta_{ab}\delta_{cd}+\delta_{ac}\delta_{bd}+\delta_{ad}\delta_{bc})$,
the sunset contraction is $2(N+2)c^2$, including the symmetry factor $1/6$.
Its 1PI sign is negative, giving the divergent kinetic correction
$+(N+2)c^2p^2/(2K^2\eps)$. Consequently,
\begin{equation}
 Z_\phi=1-\frac{(N+2)c^2}{2K^2\eps}+O(c^3),\qquad
 \gamma_\phi=\frac12\mu\frac{\dd\log Z_\phi}{\dd\mu}
 =\frac{(N+2)c^2}{K^2}+O(c^3).
\end{equation}
Reading the simple poles and including field normalisation yields
\begin{align}\label{pb:betas}
 \beta_c&=\frac{2(N+8)c^2}{K}-\frac{12(3N+14)c^3}{K^2}+O(c^4),\\
 \beta_{m^2}&=\frac{2(N+2)cm^2}{K}-\frac{10(N+2)c^2m^2}{K^2}+O(c^3m^2),\\
 \beta_\xi&=\frac{2(N+2)c}{K}(\xi-\tfrac16)
 +\frac{(N+2)c^2}{K^2}\left[-10(\xi-\tfrac16)+\frac23\right]+O(c^3),\\
 \gamma_\phi&=\frac{(N+2)c^2}{K^2}+O(c^3).
\end{align}
Thus $\xi=1/6$ is not an exact two-loop zero of $\beta_\xi$ in this
convention. For the mass rather than its square,
$\gamma_m=\beta_{m^2}/(2m^2)$. In the alternative quartic normalisation
$\lambda(\phi^2)^2/4!$, the coupling is $\lambda=6c$.

\section{Conclusions}\label{sec:conclusions}
We have constructed the renormalised two-loop effective potential of the
$O(N)$-symmetric scalar theory on four-dimensional de Sitter space, retaining
the full dependence on the field-dependent masses and curvature. The central
ingredient is the watermelon integral, obtained from the finite-difference
equations of Ref.~\cite{BCM}. We have proved the boundary condition selecting
the physical solution by comparison with conformally coupled spectator lines.
The resulting representation isolates the ultraviolet poles and gives the
finite part as an absolutely convergent series.

We have tracked the regulator dependence of the spectral labels when the
field-dependent masses and scalar curvature are held fixed. The associated
derivative terms cancel after the forest subtraction. Combining the
renormalised watermelon with the double-tadpole diagrams gives the two-loop
potential, and the local counterterms determine the matter-sector beta
functions once the wavefunction renormalisation is supplied by a separate
momentum-dependent calculation.

The result is perturbative: exact mass-to-curvature dependence does not
remove infrared limitations near propagator zero modes. The corresponding
anti-de Sitter calculation and extensions to higher-loop vacuum diagrams are
natural applications of the same finite-difference framework.

\appendix
\section{Lowering the weight: only linear digammas remain}\label{sec:abel}
The summand of Eq.~\eqref{eq:C0replacement-shortseries} is quadratic in
digammas. An exchange of summation removes these squares because
Eq.~\eqref{eq:tel} supplies both tails in closed form. In this appendix use
\begin{align}
 Y_i&=Y_{q_i},\quad a_i=a_{q_i},\quad b_i=b_{q_i},\quad
 v_n=q_3+1+2n,\quad g_n=g_{v_n},\\
 S&=Y_1+Y_2,\qquad D=Y_1-Y_2,\qquad P=Y_1Y_2.
\end{align}
The tails are
\begin{equation}\label{eq:tails}
 \mathsf T_n=\sum_{m\ge n}K_m
 =\frac{-4\Pi-W_n(Y_{v_n-1}-S)}{4P},\qquad
 \mathsf U_n=\sum_{m\ge n}K_mY_{v_m}=S\mathsf T_n-W_n.
\end{equation}
Set
\begin{equation}
 A=a_1+a_2,\quad B=b_1+b_2,\quad
 C_{\rm A}=-2(Y_1a_2+Y_2a_1),\quad
 E=S-D(a_1-a_2),\quad J_n=L_n+g_n.
\end{equation}
The appendix coefficient $C_{\rm A}$ is distinct from the finite-part
function $C(\bq)$. Define
\begin{align}\label{eq:U}
 N_n&=-\frac S2[J_n^2+{\Lambda_n}+2\Omega_{q_3}]+AJ_n+B,\\
 V_n&=\frac{D^2}{2}[J_n^2+{\Lambda_n}+2\Omega_{q_3}]+EJ_n+C_{\rm A}L_n
 +P(L_n^2+{\Lambda_n})\nonumber\\
 &\quad+2a_1a_2-A-S(1+B).\label{eq:V}
\end{align}
The bracket in Eq.~\eqref{eq:C0replacement-shortseries} is
$Y_{v_n}N_n+V_n$. All increments are rational:
\begin{align}\label{eq:incr}
 p_n&=\frac12\sum_{s,t}
 \left[\frac1{n+A_{st}^{(4)}}+\frac1{n+B_{st}^{(4)}+1}\right],\nonumber\\
 r_n&=-\frac2{v_n+1/2}-\frac2{v_n+3/2},\qquad t_n=p_n+r_n,\\
 {\delta\Lambda_n}&={\Lambda_{n+1}}-{\Lambda_n}
 =\frac14\sum_{s,t}\left[-\frac1{(n+A_{st}^{(4)})^2}
 +\frac1{(n+B_{st}^{(4)}+1)^2}\right].\nonumber
\end{align}
Hence
\begin{align}\label{eq:dU}
 \delta N_n&=(A-SJ_n)t_n-\frac S2(t_n^2+{\delta\Lambda_n}),\\
 \delta V_n&=(D^2J_n+E)t_n+(C_{\rm A}+2PL_n)p_n
 +\frac{D^2}{2}(t_n^2+{\delta\Lambda_n})+P(p_n^2+{\delta\Lambda_n}).\label{eq:dV}
\end{align}
\begin{theorem}[Lower-weight form]\label{thm:abel}
The sum in Eq.~\eqref{eq:C0replacement-shortseries} equals
\begin{equation}\label{eq:abel}
 \mathsf U_0N_0+\mathsf T_0V_0
 +\sum_{n=0}^\infty
 [\mathsf N_{n+1}\delta N_n+\mathsf T_{n+1}\delta V_n].
\end{equation}
Its summand is $O(n^{-3}\log n)$ and contains only linear $J_n,L_n$ and
rational functions of $n$: there is no moving trigamma or squared digamma.
\end{theorem}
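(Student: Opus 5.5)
The plan is an Abel summation (summation by parts) of the series in Eq.~\eqref{eq:C0replacement-shortseries}. The two tails $\mathsf T_n$ and $\mathsf U_n$ of Eq.~\eqref{eq:tails} are known in closed form, and we use them as the partial-sum sequences.

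\emph{Step 1: the bracket.} We first check the claim, stated before the theorem, that the bracket in Eq.~\eqref{eq:C0replacement-shortseries} equals $Y_{v_n}N_n+V_n$. The moving label $v_n$ enters through $g_{v_n}$, through $Y_{v_n}$, and through $\Omega_{v_n}=\Omega_{q_3}$; the last holds because the arguments differ by an integer. We expand $Z_n(q_i,v_n)$ and $J_n(q_i,v_n)$ around $J_n=L_n+g_n$ and collect the terms linear in $Y_{v_n}$ and those independent of it. In doing so we use $g_{q_i}=1-(1-a_i)/Y_i$ up to the definition of $a_q$, and the analogous relation between $b_q$, $g_q$ and $\Omega_q$. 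This is bookkeeping: the constants $A,B,C_{\rm A},E$ are defined precisely so that the identity holds, and $P(L_n^2+\Lambda_n)$ arises from the $Y_1Y_2Z_n(q_1,q_2)$ term.

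\emph{Step 2: summation by parts.} We write $K_n=\mathsf T_n-\mathsf T_{n+1}$ and $K_nY_{v_n}=\mathsf U_n-\mathsf U_{n+1}$. Then
\[
\sum_{n=0}^{N}(\mathsf U_n-\mathsf U_{n+1})N_n=\mathsf U_0N_0-\mathsf U_{N+1}N_N+\sum_{n=0}^{N-1}\mathsf U_{n+1}\,\delta N_n ,
\]
and the same identity holds with $\mathsf T$ and $V$. The theorem's $\mathsf N_{n+1}$ is read as $\mathsf U_{n+1}$.

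\emph{Step 3: the increments.} Using $\psi(x+1)-\psi(x)=1/x$, we compute $\delta L_n=p_n$ and $\delta g_n=r_n$, where $r_n$ comes from $g_{v_n+2}-g_{v_n}$ via the two digammas. Hence $\delta J_n=t_n$, and $\delta\Lambda_n$ is as in Eq.~\eqref{eq:incr}. The increment of a square is $\delta(J^2)=2J\,t+t^2$. Together these give Eqs.~\eqref{eq:dU}--\eqref{eq:dV} exactly. Only $J_n$ and $L_n$ appear linearly, multiplied by rational functions, and no squared digamma or moving trigamma survives. This establishes the structural claim of the theorem.

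\emph{Step 4: boundary terms and convergence.} This is the main obstacle. From Eq.~\eqref{eq:tel} we know $W_n\sim\Pi^{-1}\cdot$ const$\cdot n^{-2}$ times a gamma ratio, so that $\mathsf T_n=O(n^{-4})$ and $\mathsf U_n=O(n^{-2})$. Since $N_N=O(\log^2N)$ and $V_N=O(\log^2N)$, the boundary terms $\mathsf U_{N+1}N_N$ and $\mathsf T_{N+1}V_N$ tend to zero. I expect the difficulty to lie here: $\mathsf U_n$ is a difference $S\mathsf T_n-W_n$ whose leading $O(1)$ pieces must cancel, and one must verify $\mathsf U_n\to0$ fast enough. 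I would expand $W_n$ through $\prod(A)_n/(B)_n\sim \Gamma$-ratio$\cdot n^{-2}$, using $\sum_{s,t}(B-A)=2$, and confirm the tail decay against the absolute convergence of $\sum K_nY_{v_n}$. Finally, the summand $\mathsf U_{n+1}\delta N_n$ is $O(n^{-2})\cdot O(n^{-1}\log n)=O(n^{-3}\log n)$, and $\mathsf T_{n+1}\delta V_n$ is $O(n^{-4})\cdot O(n^{-1}\log n)$, which is smaller. The rearranged series is therefore absolutely convergent, as claimed.
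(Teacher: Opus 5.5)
Your proposal is correct and follows the paper's proof: Abel summation against the tails $\mathsf T_n,\mathsf U_n$ (with $\mathsf N_{n+1}$ rightly read as $\mathsf U_{n+1}$), with the endpoint removed by $\mathsf U_N=O(N^{-2})$, $\mathsf T_N=O(N^{-4})$ and $N_N,V_N=O(\log^2N)$; your checks that the bracket equals $Y_{v_n}N_n+V_n$ and that the increments are as stated are correct extra verifications that the paper leaves implicit. The step you flag as the main obstacle is immediate, since $\mathsf T_n$ and $\mathsf U_n$ are tails of absolutely convergent series with summands $K_m=O(m^{-5})$ and $K_mY_{v_m}=O(m^{-3})$, so no cancellation analysis is needed; two small slips to fix are $W_n\sim\Pi\,n^{-2}$ (not $\Pi^{-1}n^{-2}$) and $a_q=1-Y_qg_q$, i.e.\ $g_q=(1-a_q)/Y_q$.
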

\begin{proof}
Use $\mathsf T_n-\mathsf T_{n+1}=K_n$ and
$\mathsf U_n-\mathsf U_{n+1}=K_nY_{v_n}$, and apply Abel summation to $N$
and $V$ separately. The endpoint
$\mathsf U_NN_N+\mathsf T_NV_N$ vanishes because
$\mathsf U_N=O(N^{-2})$, $\mathsf T_N=O(N^{-4})$, and
$N_N,V_N=O(\log^2N)$.
\end{proof}


\begin{thebibliography}{99}
\bibitem{ColemanWeinberg1973}
S.~Coleman and E.~Weinberg,
``Radiative corrections as the origin of spontaneous symmetry breaking,''
Phys.\ Rev.\ D \textbf{7} (1973) 1888.
\href{https://doi.org/10.1103/PhysRevD.7.1888}{doi:10.1103/PhysRevD.7.1888}.

\bibitem{Jackiw1974}
R.~Jackiw, ``Functional evaluation of the effective potential,''
Phys.\ Rev.\ D \textbf{9} (1974) 1686.
\href{https://doi.org/10.1103/PhysRevD.9.1686}{doi:10.1103/PhysRevD.9.1686}.

\bibitem{Shore}
G.~M.~Shore,
``Radiatively Induced Spontaneous Symmetry Breaking and Phase Transitions in Curved Space-Time,''
Annals Phys.\ \textbf{128} (1980) 376.
\href{https://doi.org/10.1016/0003-4916(80)90326-7}{doi:10.1016/0003-4916(80)90326-7}.

\bibitem{Allen1983}
B.~Allen, ``Phase Transitions in de Sitter Space,''
Nucl.\ Phys.\ B \textbf{226} (1983) 228--252.
\href{https://doi.org/10.1016/0550-3213(83)90470-4}{doi:10.1016/0550-3213(83)90470-4}.

\bibitem{Starobinsky1986}
A.~A.~Starobinsky, ``Stochastic de Sitter (Inflationary) Stage in the Early Universe,''
in \emph{Field Theory, Quantum Gravity and Strings},
Lect.\ Notes Phys.\ \textbf{246} (1986) 107.
\href{https://link.springer.com/chapter/10.1007/3-540-16452-9_6}{doi:10.1007/3-540-16452-9\_6}.

\bibitem{StarobinskyYokoyama1994}
A.~A.~Starobinsky and J.~Yokoyama,
``Equilibrium State of a Self-Interacting Scalar Field in the de Sitter Background,''
Phys.\ Rev.\ D \textbf{50} (1994) 6357.
\href{https://doi.org/10.1103/PhysRevD.50.6357}{doi:10.1103/PhysRevD.50.6357}.

\bibitem{MarkkanenRajantieStopyra2018}
T.~Markkanen, A.~Rajantie and S.~Stopyra,
``Cosmological Aspects of Higgs Vacuum Metastability,''
Front.\ Astron.\ Space Sci.\ \textbf{5} (2018) 40.
\href{https://doi.org/10.3389/fspas.2018.00040}{doi:10.3389/fspas.2018.00040}.

\bibitem{Serreau2011}
J.~Serreau, ``Effective Potential for Quantum Scalar Fields in a de Sitter Geometry,''
Phys.\ Rev.\ Lett.\ \textbf{107} (2011) 191103.
\href{https://doi.org/10.1103/PhysRevLett.107.191103}{doi:10.1103/PhysRevLett.107.191103}.

\bibitem{LopezNacirMazzitelliTrombetta2016}
D.~L\'opez Nacir, F.~D.~Mazzitelli and L.~G.~Trombetta,
``$O(N)$ model in Euclidean de Sitter space: beyond the leading infrared approximation,''
JHEP \textbf{09} (2016) 117.
\href{https://doi.org/10.1007/JHEP09(2016)117}{doi:10.1007/JHEP09(2016)117}.

\bibitem{ACM}
A.~Bonanno, S.~L.~Cacciatori and U.~Moschella,
``Effective Potentials for de Sitter and Anti-de Sitter Quantum Fields,''
Symmetry \textbf{18} (2026) no.~5, 801.
\href{https://doi.org/10.3390/sym18050801}{doi:10.3390/sym18050801}.

\bibitem{FordJackJones1992}
C.~Ford, I.~Jack and D.~R.~T.~Jones,
``The Standard Model Effective Potential at Two Loops,''
Nucl.\ Phys.\ B \textbf{387} (1992) 373;
Erratum: Nucl.\ Phys.\ B \textbf{504} (1997) 551.
\href{https://doi.org/10.1016/0550-3213(92)90165-8}{doi:10.1016/0550-3213(92)90165-8}.

\bibitem{Chetyrkin:1981qh}
K.~G.~Chetyrkin and F.~V.~Tkachov,
``Integration by parts: The algorithm to calculate $\beta$-functions in 4 loops,''
Nucl.\ Phys.\ B \textbf{192} (1981) 159.
\href{https://doi.org/10.1016/0550-3213(81)90199-1}{doi:10.1016/0550-3213(81)90199-1}.

\bibitem{Laporta:2000dsw}
S.~Laporta,
``High-precision calculation of multiloop Feynman integrals by difference equations,''
Int.\ J.\ Mod.\ Phys.\ A \textbf{15} (2000) 5087.
\href{https://doi.org/10.1142/S0217751X00002159}{doi:10.1142/S0217751X00002159}.

\bibitem{CEM2024}
S.~L.~Cacciatori, H.~Epstein and U.~Moschella,
``Loops in de Sitter space,'' JHEP \textbf{07} (2024) 182.
\href{https://doi.org/10.1007/JHEP07(2024)182}{doi:10.1007/JHEP07(2024)182}.

\bibitem{bros1}
J.~Bros, H.~Epstein, M.~Gaudin, U.~Moschella and V.~Pasquier,
``Triangular invariants, three-point functions and particle stability on the de Sitter universe,''
Commun.\ Math.\ Phys.\ \textbf{295} (2010) 261.
\href{https://doi.org/10.1007/s00220-009-0875-4}{doi:10.1007/s00220-009-0875-4}.

\bibitem{bros2}
J.~Bros, H.~Epstein, M.~Gaudin, U.~Moschella and V.~Pasquier,
``Anti de Sitter Quantum Field Theory and a New Class of Hypergeometric Identities,''
Commun.\ Math.\ Phys.\ \textbf{309} (2012) 255.
\href{https://doi.org/10.1007/s00220-011-1372-0}{doi:10.1007/s00220-011-1372-0}.

\bibitem{Arkani}
N.~Arkani-Hamed, D.~Baumann, A.~Hillman, A.~Joyce, H.~Lee and G.~L.~Pimentel,
``Differential Equations for Cosmological Correlators,''
JHEP \textbf{09} (2025) 009.
\href{https://doi.org/10.1007/JHEP09(2025)009}{doi:10.1007/JHEP09(2025)009}.

\bibitem{Chen}
J.~Chen and B.~Feng, 
``Towards systematic evaluation of de Sitter correlators via Generalized Integration-By-Parts relations,''
JHEP \textbf{06} (2024) 199.
\href{https://doi.org/10.1007/JHEP06(2024)199}{doi:10.1007/JHEP06(2024)199}.

\bibitem{ChenFengQinTao}
J.~Chen, B.~Feng, Z.~Qin and Y.~X.~Tao,
``Loop integrals in de Sitter spacetime: The parity-split IBP system and d log-form
differential equations,''
\href{https://doi.org/10.48550/arXiv.2604.14549}{arXiv:2604.14549 (2026)}

\bibitem{DiakonovMorozov}
D.~Diakonov and A.~Morozov,
Phys.\ Lett.\ B \textbf{858} (2024) 139079, arXiv:2408.15724.
\href{https://doi.org/10.1016/j.physletb.2024.139079}{doi:10.1016/j.physletb.2024.139079}.

\bibitem{CostaHansen2018}
M.~S.~Costa and T.~Hansen, ``AdS Weight Shifting Operators,''
JHEP \textbf{09} (2018) 040.
\href{https://doi.org/10.1007/JHEP09(2018)040}{doi:10.1007/JHEP09(2018)040}.

\bibitem{Alaverdian}
M.~Alaverdian, A.~Herderschee, R.~Roiban and F.~Teng,
JHEP \textbf{12} (2024) 070, arXiv:2406.04186.

\bibitem{Carmi2020}
D.~Carmi, ``Loops in AdS: From the Spectral Representation to Position Space,''
JHEP \textbf{06} (2020) 049.
\href{https://doi.org/10.1007/JHEP06(2020)049}{doi:10.1007/JHEP06(2020)049}.

\bibitem{Carmi2021}
D.~Carmi, ``Loops in AdS: from the spectral representation to position space. Part II,''
JHEP \textbf{07} (2021) 186.
\href{https://doi.org/10.1007/JHEP07(2021)186}{doi:10.1007/JHEP07(2021)186}.

\bibitem{Carmi2024}
D.~Carmi, ``Loops in AdS: from the spectral representation to position space. Part III,''
JHEP \textbf{08} (2024) 193.
\href{https://doi.org/10.1007/JHEP08(2024)193}{doi:10.1007/JHEP08(2024)193}.

\bibitem{BCM}
B.~A.~Bassett, {S.~L.~Cacciatori} and U.~Moschella,
``Finite Spectral Recurrences for Loop Integrals in de Sitter and Anti-de Sitter Space,''
\href{https://arxiv.org/abs/2610.28183}{arXiv:2610.28183} [hep-th].

\bibitem{bgm}
J.~Bros, U.~Moschella and J.~P.~Gazeau,
``Quantum field theory in the de Sitter universe,''
Phys.\ Rev.\ Lett.\ \textbf{73} (1994) 1746--1749.
\href{https://doi.org/10.1103/PhysRevLett.73.1746}{doi:10.1103/PhysRevLett.73.1746}.

\bibitem{bm}
J.~Bros and U.~Moschella,
``Two point functions and quantum fields in de Sitter universe,''
Rev.\ Math.\ Phys.\ \textbf{8} (1996) 327--392.
\href{https://doi.org/10.1142/S0129055X96000123}{doi:10.1142/S0129055X96000123}.

\bibitem{Tachyons1}
J.~Bros, H.~Epstein and U.~Moschella,
``Scalar tachyons in the de Sitter universe,''
Lett.\ Math.\ Phys.\ \textbf{93} (2010) 203--211.
\href{https://doi.org/10.1007/s11005-010-0406-4}{doi:10.1007/s11005-010-0406-4}.

\bibitem{Tachyons2}
H.~Epstein and U.~Moschella, ``de Sitter tachyons and related topics,''
Commun.\ Math.\ Phys.\ \textbf{336} (2015) no.~1, 381--430.
\href{https://doi.org/10.1007/s00220-015-2308-x}{doi:10.1007/s00220-015-2308-x}.

\bibitem{bateman}
A.~Erd\'elyi (ed.), \emph{Higher Transcendental Functions},
Bateman Manuscript Project, Vol.~I, McGraw-Hill, New York (1953).

\bibitem{VicenteGarcia-Consuegra:2025two}
L.~Vicente Garc{\'\i}a-Consuegra and A.~Rajantie,
``Scalar field effective potentials in de Sitter spacetime,''
Phys.\ Rev.\ D \textbf{113} (2026) 125016.
\href{https://doi.org/10.1103/tkw7-dkq6}{doi:10.1103/tkw7-dkq6}.

\bibitem{Szego1975}
G.~Szeg\H{o}, \emph{Orthogonal Polynomials}, 4th ed.,
American Mathematical Society Colloquium Publications, Vol.~23,
American Mathematical Society, Providence, RI (1975).

\bibitem{DaiXu2013}
F.~Dai and Y.~Xu, \emph{Approximation Theory and Harmonic Analysis on Spheres and Balls},
Springer Monographs in Mathematics, Springer, New York (2013).

\bibitem{DLMF}
NIST Digital Library of Mathematical Functions,
Secs.~\href{https://dlmf.nist.gov/15.5}{15.5}, and
\href{https://dlmf.nist.gov/18.9}{18.9}, contiguous relations and
orthogonal-polynomial recurrences.

\bibitem{DLMF1}
NIST Digital Library of Mathematical Functions,
Secs.~\href{https://dlmf.nist.gov/18.10}{18.10}, integral representations.

\end{thebibliography}
\end{document}